\newtheorem{theorem}{Theorem}
\newtheorem{corollary}{Corollary}[theorem]
\newtheorem{lemma}[theorem]{Lemma}
\newtheorem{definition}{Definition}
\newcommand{\mc}{\mathcal}
\newcolumntype{M}[1]{>{\centering\arraybackslash}m{#1}}
\DeclareMathOperator*{\E}{\mathbb{E}}
\DeclareMathOperator*{\EF}{\mathbb{E}[\textit{F} \hspace{2pt}]}
\begin{document}

\definecolor{purple}{rgb}{0.7,0,0.7}
\definecolor{dkgreen}{rgb}{0,0.6,0}
\definecolor{brown}{rgb}{0.8,0.4,0}
\definecolor{midnightblue}{rgb}{0.39,0.58,0.93}

\preprint{APS/123-QED}

\title{Analysis and Suppression of Errors in Quantum Random Access Memory under Extended Noise Models}

\author{Rohan Mehta}
\email{rohanmehta@uchicago.edu}
\affiliation{%
Pritzker School of Molecular Engineering, University of Chicago, Chicago, IL 60637, USA
}%
\author{Gideon Lee}%
\affiliation{%
Pritzker School of Molecular Engineering, University of Chicago, Chicago, IL 60637, USA
}%
\author{Liang Jiang}
\affiliation{%
Pritzker School of Molecular Engineering, University of Chicago, Chicago, IL 60637, USA
}%

\begin{abstract}
    Quantum random access memory (QRAM) is required  for numerous quantum algorithms and network architectures. Previous work has shown that the ubiquitous bucket-brigade QRAM is highly resilient to arbitrary local incoherent noise channels occurring during the operation of the QRAM [PRX Quantum 2, 020311 (2021)], with query infidelities growing only polylogarithmically with memory width when errors are assumed to only occur on individual routers. We extend this result to a large class of generalized settings that arise in realistic situations, including arbitrary initialization errors, spatially correlated errors, as well as coherent errors, maintaining the polylogarithmic scaling in all instances. Fully quantifying the extent to which QRAM’s noise resilience holds may provide a guide for the design of QRAM architectures — for instance, the resilience to initialization errors indicates that a reset protocol between successive queries may not be necessary. In the case of coherent errors, we find an up-to-quadratic increase in the infidelity bound, and therefore discuss generalizations to randomized compiling schemes, which usually are rendered inapplicable in the QRAM setting, to tailor these errors into more favorable stochastic noise.
\end{abstract}

\date{\today}
\maketitle

\section{\label{sec:level1}Introduction}
The concerted global effort towards the realization of quantum computers predicates on a quantum advantage that can be realized in comparison to classical devices. Many of the most interesting proposals for quantum algorithms predicate on solving longstanding, fundamentally classical problems: we provide the algorithm a description of a classical problem, and with high probability, we are returned a useful classical answer. Quantum processing can provide utility at intermediary stages to reduce resource consumption, but given the interest in these so-called classical end-to-end applications \cite{Tang2021}, interfaces between classical data and quantum processors are imperative. The primary module to achieve this, quantum random access memory (QRAM), an analogue of classical random access memory, has been presumed as an oracle to facilitate claims of quantum speedup in various algorithms \cite{Harrow2009, Schutzhold2003, Gui2024}. Moreover, QRAM-based schemes are prevalent in complexity-optimal quantum state preparation \cite{Casares2020, Ashhab2022}, quantum data centers \cite{Liu2023}, and resource-efficient metrology \cite{Khabiboulline2019, Khabiboulline2019_2}. Despite the importance of realizing a quantum random access memory, concrete constructions have been explored only recently, and an experimental demonstration of QRAM has yet to be established.

Classical RAM retrieves an entry $x_i$ from a database $\vec{x}$ of size $N = 2^n$ corresponding to an $n$ bit binary address state $i$. Conversely, a QRAM coherently queries a superposition of classical data based on a quantum address. As an oracle, a query to the QRAM acts on a quantum address concatenated with a $\ket{+}$ state bus qubit,
$\ket{\psi_{\textit{in}}} = \sum_{i = 0}^{N - 1} \alpha_i \ket{i}\ket{+}$, where each $i \in \mathbb{F}_2^n$ is the binary index corresponding to a particular entry in $\vec{x}$.
\begin{equation}
    O: \sum_{i = 0}^{N - 1} \alpha_i \ket{i}^{\mathcal{A}} \ket{+}^\mathcal{B} \mapsto  \sum_{i = 0}^{N - 1} \alpha_i \ket{i}^{\mathcal{A}} Z^{x_i}\ket{+}^\mathcal{B} \label{eq: QRAM Oracle}
\end{equation}
where, for simplicity, each $x_i$ is a classical bit of information, hence the need for only one bus qubit. We will define 
\begin{equation}
    \ket{\psi_{\textit{out}}} := O \ket{\psi_{\textit{in}}}
\end{equation}
Here, $\mathcal{A}, \mathcal{B}$ denote the address and bus registers, respectively. Though we will not dwell on these applications, modifications to aspects of the prototypical QRAM scheme can be made to accommodate quantum data, which of course cannot be copied but instead transferred due to no-cloning limitations. 

In order to achieve the claimed advantages of algorithms requiring a QRAM, it is crucial that the QRAM operation can be carried out with high fidelity. Yet, performing full error correction on QRAM is expected to extremely challenging (see for instance \cite{Hann2021, Matteo2020, Lee2023, Jaq2023} for a review of the outstanding challenge of error-corrected QRAM). Native noise suppression and mitigation is thus critical, as this will reduce the level of error correction necessary for QRAM to reach certain target error rates, ensuring that the overhead for fault-tolerant QRAM does not dominate the cost of scalable quantum computing. Reducing the base error rates of QRAM may also allow us to explore the applications of QRAM in the early fault-tolerance regime, where full error correction may not be available. One prong of this research program is to explore noise resilient experimental designs of the various components that make up a QRAM \cite{Hann_PRL_2019_hybrid_QRAM, Hong_PRA_2012_robust_QRAM, Moiseev_JMO_2016_timebin_QRAM, Wang_arxiv_2024_transmon_QRAM, Wang_PRApp_2021_SC_QRAM_expt, Chen_PRXQ_2021_spinphoton_QRAM, Weiss_PRXQ_2024_3D_QRAM, Asaka_QST_2021_walker_QRAM}. At the same time, from an architectural point of view, it is equally important to identify schemes that ensure that QRAM is highly noise-resilient even prior to error correction \cite{Giovannetti2008_Architectures, Giovannetti2008, Weiss_arxiv_2024_faulty_towers, XuTBA, Lee2023}. To this end, the most promising approach to building QRAM involves the bucket-brigade architecture \cite{Giovannetti2008_Architectures, Giovannetti2008} and variations thereof \cite{Xu2023, XuTBA}. In Section~\ref{sec: prelims}, we review prior work on the bucket-brigade architecture and its noise resilience, which constitute the primary object of study for this work. For now, it suffices to say that despite requiring a hardware overhead that goes as $O(N)$, bucket-brigade QRAM only suffers a query infidelity that goes as $O(\text{poly}(n))$, and can thus be said to be naturally robust \cite{Hann2021}.  Given how much hinges on the bucket-brigade's natural noise resilience, it is important to fully demarcate the bounds of this noise resilience. We will implicitly assume the bucket-brigade architecture whenever mentioning QRAM.

The contributions of this work to the research program of building robust QRAM falls into two main parts. The first part pertains to expanding known results on the natural noise resilience enjoyed by the bucket-brigade QRAM. Extending beyond the standard paradigm of local, uncorrelated, stochastic noise occurring during the operation of the QRAM, in Section~\ref{sec:extended_noise_resilience}, we greatly extend the scope of this result to show that QRAM also exhibits a natural noise resilience against a much larger class of errors that may arise in practice. This includes: (1) Spatially correlated noise across multiple routers (Section~\ref{sec: Correlated Errors}), (2) Coherent noise, in which the relative phase of Kraus operators on different routers must be taken into account (Section~\ref{sec: coherent noise}), and (3) Propagated errors between queries that arise from failing to reset the QRAM tree, for instance, due to hardware constraints (Section~\ref{sec: Arbitrary initializations}). In all the above cases, we show that QRAM retains its polylogarithmic error scaling, but the strength of the noise suppression depends on the class of the error. 

Beyond relying on the bucket-brigade's natural noise resilience, it is also essential to explore alternatives to full QEC that can further push down these error rates. Standard schemes for error mitigation \cite{Cai_RMP_2023_QEM_review} may be applied \cite{Shi_arxiv_2024_error_mitigated_qram}, but do not necessarily serve QRAM well; such schemes primarily produce expectation values , whereas the goal of QRAM is to produce $| \psi_{out} \rangle$ as a resource for computation. The complexity and size of the bucket-brigade architecture is such one must then look to bespoke means of suppressing errors in QRAM \cite{Weiss_arxiv_2024_faulty_towers, Lee2023, Weiss_PRXQ_2024_3D_QRAM}. Following this line of research, the second part of our work involves a QRAM-specific proposal for suppressing errors beyond the natural noise resilience that QRAM already enjoys. In Section~\ref{sec: Pauli twirling}, we consider Pauli twirling \cite{Wal2016}, which is frequently assumed in most circuit models of quantum computing in order to turn coherent errors into incoherent ones, which then scale more favourably and are more amenable to error correction. According to Section~\ref{sec: coherent noise}, this difference in the effect of coherent and incoherent errors also holds for QRAM.  However, due to the non-Clifford and entangling nature of QRAM, standard techniques for randomized compiling, which rely on the closure of the Pauli group under Clifford operations or single-qubit non-Cliffords, do not work. While one may not be able to directly twirl each individual gate in the QRAM, we show that one may nevertheless exploit the symmetry between the computing and uncomputing steps of the QRAM to perform twirling. 

Our paper is structured as follows. In Section~\ref{sec: prelims}, we review of bucket-brigade QRAM and its natural noise resilience properties. In Section~\ref{sec:extended_noise_resilience}, we show that these noise resilience properties extend, beyond the original proof, to many additional cases of practical interest. In Section~\ref{sec: Pauli twirling}, we propose a Pauli twirling scheme for QRAM. Finally, we discuss our results and conclude in Section~\ref{sec: Discussion}.

\begin{figure}[hbtp]
    \centering
    \includegraphics[width=0.9\linewidth]{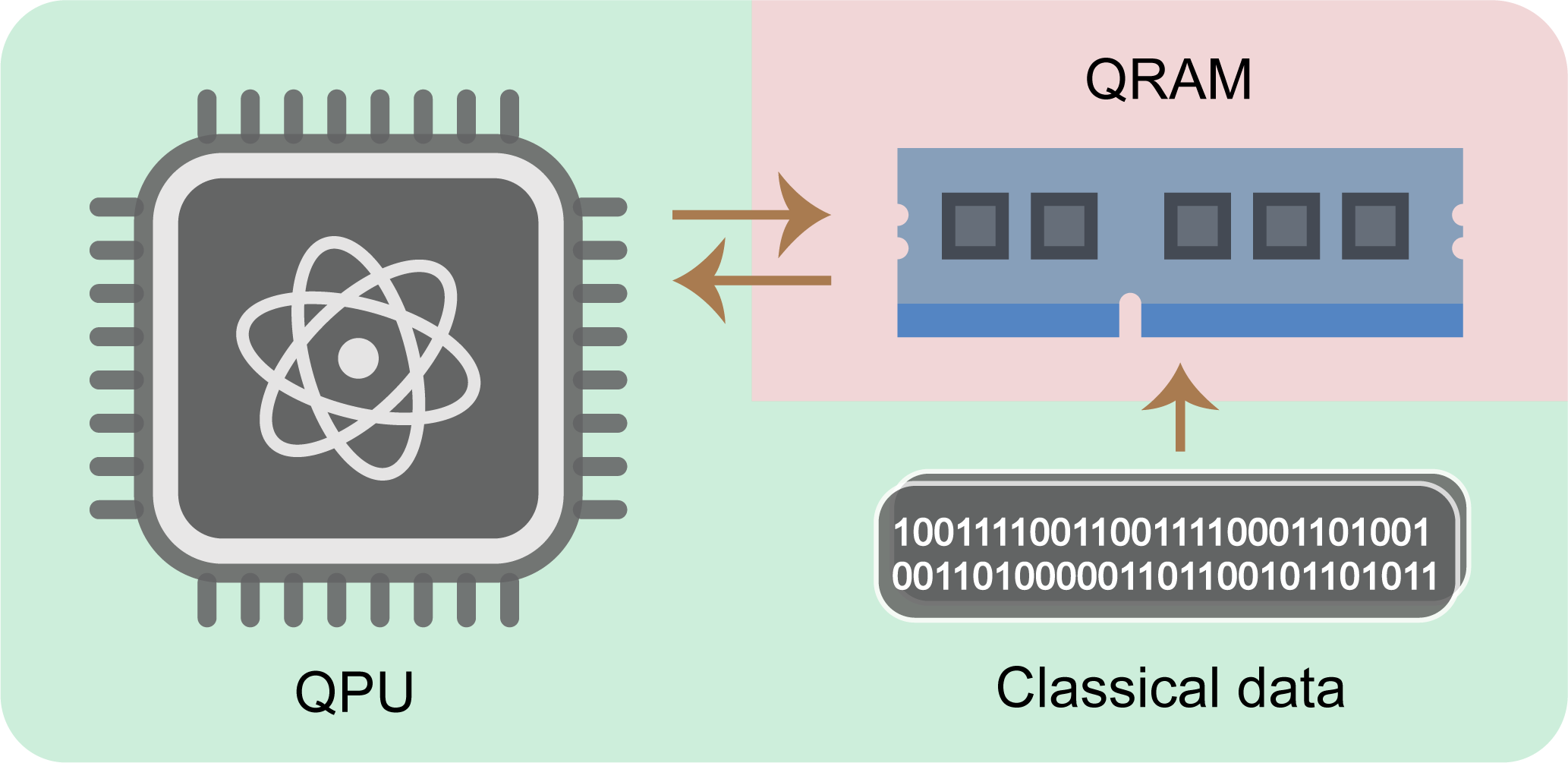}
    \caption{Conceptual overview of the role of quantum random access memory in a quantum computer architecture. The QRAM mediates interactions between the QPU and a classical database. Bidirectional arrows between the QPU and QRAM indicate that the processor could solicit several queries throughout the algorithm. Since the QRAM cannot write data to a classical memory in superposition; it can only read classical data. The green zones indicate very low noise in both the QPU and the classical memory due to error correction, whereas the red zone for QRAM indicates higher error rates.}
    \label{fig:QRAM in quantum computer architecture}
\end{figure}

\section{Preliminaries: Bucket-Brigade QRAM}\label{sec: prelims}
\subsection{Overview}\label{sec: Three-level QRAM Introduction}
Similar to classical random access memory, a quantum random access memory architecture shuttles the bus qubit to the desired mode of a database, where it can locally copy the relevant information. In the event that $\ket{\psi_{\textit{in}}}$ comprises a superposition of addresses, each query is realized coherently, such that the destination of the bus qubit is entangled with each binary address states. Classically-controlled rotations indiscriminately encode data onto several designated qubit registers, wherein the state of the bus may be stored depending on the address. Each $x_i$ data entry in the memory is an independent bit, requiring $O$ to generate complex entanglement. This entanglement is realized spatially, enabling local operations to act selectively on individual elements of the address superposition.

The bucket-brigade scheme realizes this spatial entanglement structure through a series of so-called quantum routers, $r$, each of which can be realized with a constant number of fixed-dimension qudits equipped with a routing operation $U$. If the ``control state'' of the router, $\ket{r_c}$, occupies the ground state, $\ket{W}$, the router idles; hence, we refer to $\ket{W}$ as ``passive.'' Otherwise, depending on whether the control state is $\ket{0}$ or $\ket{1}$, the router navigates an incident quantum signal in distinct spatial directions corresponding to two output modes. We refer to the mode associated with $\ket{0}$ ($\ket{1})$ as the left (right) output. Accordingly, we label the subspace spanned by $\ket{0}$ and $\ket{1}$ as ``active.''

We refer to a router capable of resolving all three of these routing states as a ``three-level router.'' A ``two-level'' router accesses only the active routing subspace, $\ket{r_c} \in \text{span}\{\ket{0}, \ket{1}\}$. Numerous physical structures for quantum routers that realize such an operation have been proposed \cite{Hann_PRL_2019_hybrid_QRAM, Hong_PRA_2012_robust_QRAM, Moiseev_JMO_2016_timebin_QRAM, Wang_arxiv_2024_transmon_QRAM, Wang_PRApp_2021_SC_QRAM_expt, Chen_PRXQ_2021_spinphoton_QRAM, Weiss_PRXQ_2024_3D_QRAM, Asaka_QST_2021_walker_QRAM}; we will consider an equivalent model where our routers are dual-register. Each router consists of two qudits, the control qudit $r_c$, and a hold qudit $r_h$. Between adjacent quantum routers connected as a depth-1 binary tree, $U^{\vec{r}}$,  
\begin{equation}
    \vec{r} := (r_c, r_h, r_0, r_1)
\end{equation}
can be implemented with two controlled-SWAP gates (Fig. \ref{fig: Routing}). In the event that $r$ is a three-level router, qubit gates should be understood as acting on only the active subspace, with trivial action on the wait state. For example, we will use $X$ to denote a qubit bit flip in two-level routers, and simultaneously in three-level routers, $X = \ketbra{W}{W} + \ketbra{0}{1} + \ketbra{1}{0}$.

\begin{figure}[hbtp]
    \centering
    \includegraphics[width=\columnwidth]{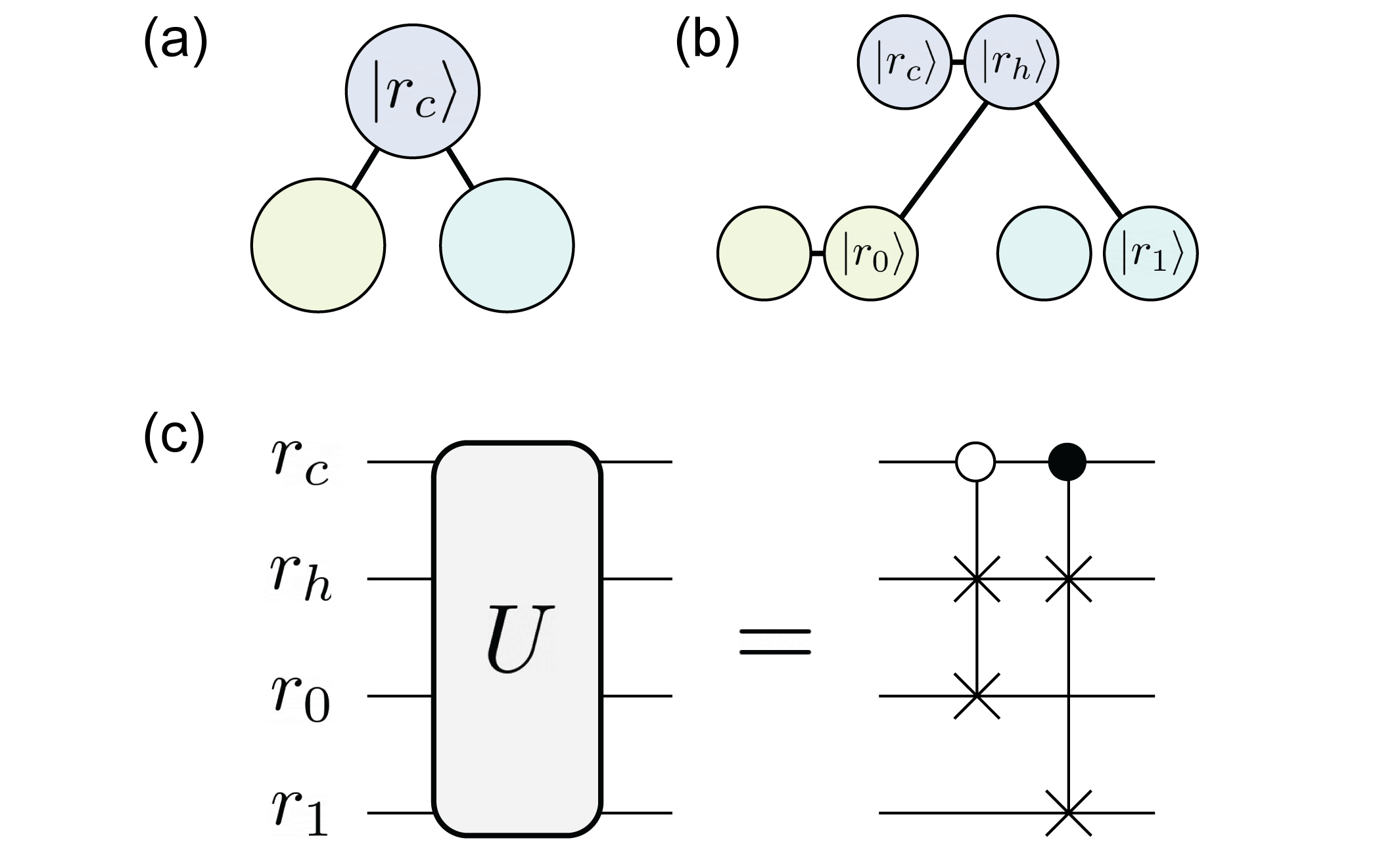}
    \caption{Microscopic depiction of a quantum router. (a) A depth-1 bucket-brigade QRAM tree with a root router $r$ that can support $n = 1$ address qubits of routing information and a single-qubit bus. Edges denote support for inter-router gates. (b) Each router has been split up into its constituent registers. Namely, we distinguish the control register, $r_c$, and the hold register $r_h$, that together form $r$. The hold registers of $r$'s children qubits are respectively labeled $r_0$ and $r_1$. In subsequent illustrations, we adopt the convention that routers depict only the state of their control register. (c) A synthesis of the routing gate on $\vec{r}$ with two controlled-SWAPs.}
    \label{fig: Routing}
\end{figure}

Stitching together a set of $N$ dual-mode quantum routers $\mc{R}$, we can construct the interaction graph of bucket-brigade architecture \cite{Giovannetti2008} that describes locally applicable two-qubit gates as a perfect binary tree $G = (\mc{R}, E)$. The query circuit of the QRAM consists of sequentially injecting address qubits into the root router of the tree, followed by routing operations. At the start of the circuit, all of the $2N$ qutrits comprising the routers in $\mathcal{R}$ are initialized to $\ket{W}$. Each $m$th address qubit makes its way to the $m$th level of the binary tree through a sequence of routing operations. When it reaches the $m$th level, a SWAP operation moves it from the hold to the control register. This way, the qubit of encoded information on each address becomes available for routing deeper into the tree. A query to any address index $i \in \mathbb{F}_2^n$ excites only the routers in a corresponding ``branch'' of the tree $\mc{R}_i$, which is the set of qubits that lie along the path from the root router to $x_i$ (Fig. \ref{fig: Bucket-brigade scheme}(d)).  This way, only $O(n)$ qutrits are excited into the active subspace. Finally, the bus qubit(s) are routed to the leaf layer of the binary tree, adjacent to which the classical memory $\vec{x}$ is stored. Each leaf router is associated with a unique entry in this memory array, $x_i$. To copy the corresponding data, local, classically-controlled $Z$ rotations are performed on the leaf routers. The reverse routing operation extracts the address and bus qubits from the tree, disentangling them completely from the tree, which is restored to its initial state under perfect operations. The explicit circuit is shown in Fig. \ref{fig: Bucket-brigade scheme}. Since gates performed on different levels of the tree may be parallelized, this query circuit takes $O(n)$ time. 

\begin{figure*}
    \centering{}\includegraphics[width=1.8\columnwidth]{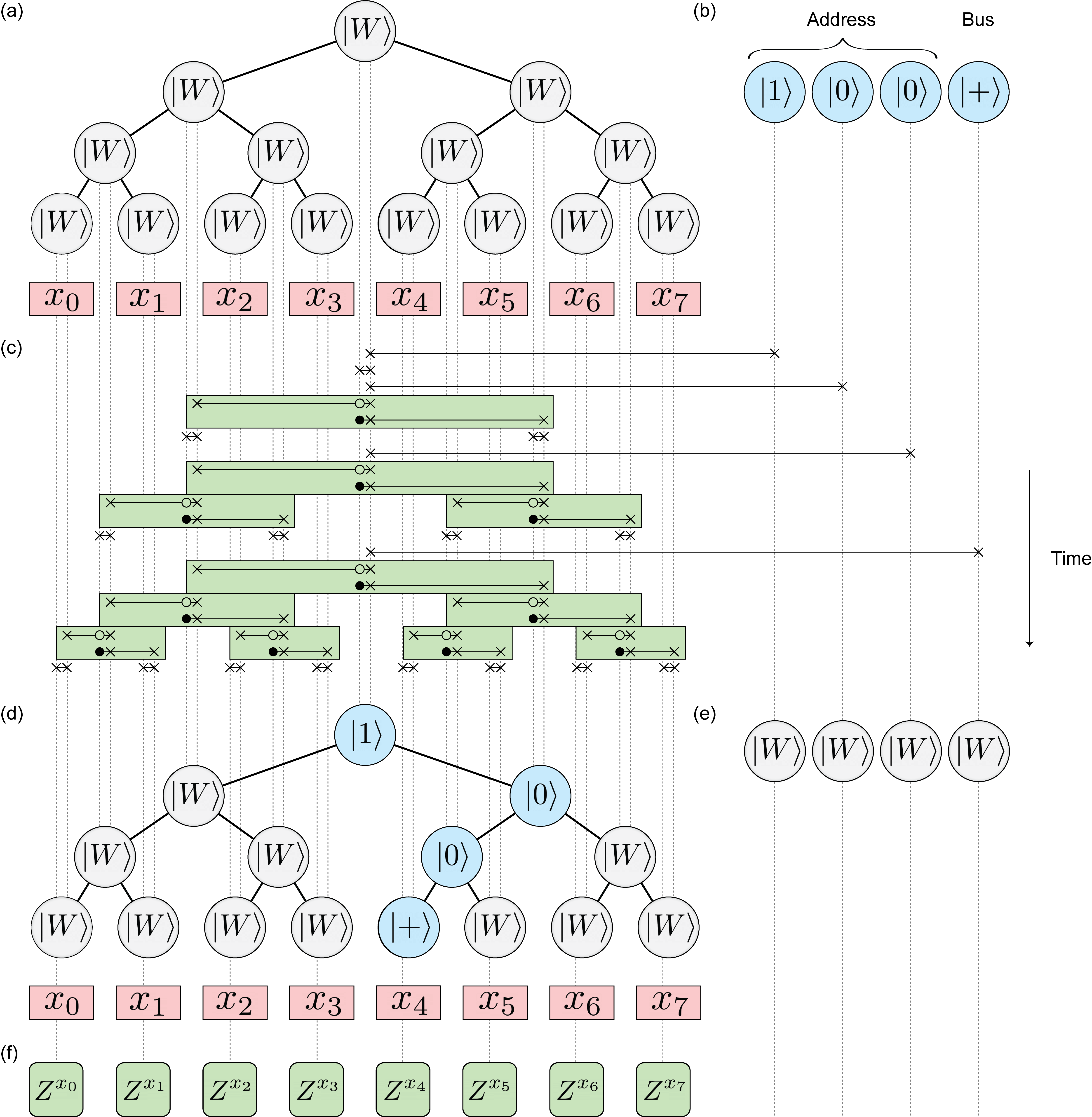} 
    \caption{Blueprint of the bucket-brigade scheme. Circuit diagrams for the query (b), (f) are depicted alongside the hardware architecture.
    (a) Query tree for $n = 3$ initialized to $\ket{W}^{\otimes 2N}$; each router displays the state of its control register. Two circuit lines (dotted gray) emerge from each router, with the left (right) line corresponding to the control (hold) qutrit register of the router. Leaf routers are paired with unique memory cells of a $\dim \vec{x} = 8$ bit memory. (b) Address and bus encoding $\ket{\psi_{\textit{in}}} = \ket{100}\ket{+}$. (c) Unparallelized circuit diagram for $V_d$. Each green block corresponds to an implementation of the routing unitary. (d) State of the tree and (e) address qubits after the completion of $V_d$. (f) Circuit implementation of $V_{\vec{x}}$. Each $Z$ gate is raised to the power of the memory cell value at $x_i$, such that a bit of classical information is transcribed into the $\{\ket{\pm}\}$ basis of the bus.
    }
    \label{fig: Bucket-brigade scheme}
\end{figure*}

Denoting the query circuit as $Q$, the ideal implementation of $O$ can be written as $O\rho_{\textit{in}}O^{\dag} = \Tr_{\mc{R}}(Q \rho_{\textit{in}} Q^{\dag})$, where $Q$ is synthesized from  sequential layers of routing operations $V_t$, $t \in \{1, 2, \dots, \tau\}$ and a $Z$-type Pauli on the $\sim N/2$ leaf routers determined by $\vec{x}$, which we will refer to as the copying gate.
\begin{equation}
    Q = V_\tau \dots V_{\frac{\tau}{2} + 1} V_{\vec{x}} V_{\frac{\tau}{2}} \dots V_{1} \label{eq: Q}
\end{equation}
The scheme used to inject and route the address and bus qubits to their destinations before the copy gate, $V_{\frac{\tau}{2}} \dots V_1$ is reversed to extract these from the tree, $V_{\tau} \dots V_{\frac{\tau}{2} + 1}$, so that routing exhibits a symmetry $V_{t} = V_{\tau - t}$. Thus, we will define the downstream routing procedure $V_d = V_{\frac{\tau}{2}} \dots V_1$ so that the extraction procedure is simply $V_d^{\dag}$, i.e., $Q = V_d^{\dag} V_{\vec{x}} V_d$.

Without prior knowledge of the address state, we must perform the routing circuit uniformly across the width of the tree. When the entire address register has been offloaded, the tree is in a superposition where each branch forms a contiguous path to the data, whereas the remainder of the tree remains passive. As the address length increases, successive levels of the QRAM tree grow exponentially in qubit count, as does the expected number of error events. This is counterbalanced by the fact that an error event in an individual branch contributes exponentially less to the query fidelity. Roughly, the expected fraction of corrupted branches at each layer remains constant, which translates to a constant fraction of successful queries, and thus, in a single instant, the infidelity growth is bounded as linear in the number of such noisy layers. Key to the robustness of the bucket-brigade model is the feature that an exponentially small fraction of the tree is ever entangled with each address destination. This low entanglement ensures that in the presence of decoherence, dephasing between queries to distinct branches is also exponentially suppressed.

Let us denote the set of address qutrits as $\mathcal{A}$ and the bus as $\mc{B}$. In what follows, we will use  $\mc{R}$ to refer interchangeably to both the $N$ quantum routers and their $2N$ underlying qutrits.
\begin{definition}
    Let $U$ denote the query unitary on the $n$ address qubit and a $2N$ ancillary qutrit input state 
    $\sigma_{\textit{in}} = \ketbra{\psi_{\textit{in}}}{\psi_{\textit{in}}} \otimes \rho^{\mc{R}}$, such that $\Tr_{\mc{R}} U \sigma_{\textit{in}} U^\dag = O \sigma_{\textit{in}} O^\dag$. Letting $\widetilde{\mc{Q}}$ denote a noisy realization of $Q$, we define the query fidelity as 
    \begin{equation}
        F_q := \braket{\psi_{\textit{out}} | \Tr_{\mc{R}} (\tilde{\sigma}_{\textit{out}}) | \psi_{\textit{out}}}. \label{eq: Query fidelity}
    \end{equation}
    where $\tilde{\sigma}_{\textit{out}} = \tilde{Q}(\sigma_{\textit{in}})$
 \label{def: Query fidelity}
\end{definition}
We assume the address and bus registers to be noiseless throughout the circuit, though we will revisit this assumption in Section \ref{sec: Discussion}. 
\begin{definition}
    For a quantum channel $\mc{E} \in \textnormal{CPTP}(\mc{H})$, we define its error rate as
    \begin{equation}
        \varepsilon := \min_{\{K_\mu\}} \max_{\psi} \left(1 - |\Re \langle K_0 \rangle_{\psi}|^2\right), \label{eq: eps}
    \end{equation}
    where $\{K_{\mu}\}$ is minimized over the set of Kraus representations of $\mc{E}$. This bounds the distance from $\mc{E}$ to the identity map as a consequence of the inequality
    \begin{equation}
        \max_{\rho} \left(1 - F(\mc{E}(\rho), \rho) \right) \leq \varepsilon,
    \end{equation}
    where $F$ denotes the state fidelity. 
    \label{def: Error rate}
\end{definition}

Def. \ref{def: Error rate} is equivalent to that of the error rate in \cite{Hann2021} for a system of only constant size---a distinction that will become important in Sec \ref{sec: coherent noise}. For convenience, we assume that the spectrum of the channel's Choi matrix is dominated by the principal Kraus operator, such that the distance between $K_0$ and the identity is much smaller than unity. Operationally, $\min_{\psi} |\text{Re}\langle K_0 \rangle_{\psi}|$ gives the smallest singular value of the Hermitian component of $K_0$, corresponding to the maximum contraction of the state space caused by the backaction of measuring the effect $K_0^\dag K_0$.

Every layer of gates $V_t$ and $V_{\vec{x}}$ is succeeded by a noise channel $\overline{\mc{E}}_t^{\mc{R}}$ or $\overline{\mc{E}}_{\vec{x}}^{\mc{R}}$. The superscript denotes that these errors act on the Hilbert space of the routing qutrits, and the overline denotes that the composite channel is supported on the entire tree. We model these errors separable on individual routers, i.e.,
\begin{equation}
    \overline{\mc{E}}_t^{\mc{R}} = \bigotimes_{r \in \mc{R}} \mc{E}_t^{r}, \label{eq: Noise separability}
\end{equation}
as in \cite{Hann2021}. One of the results of this work, as presented in Section \ref{sec: Correlated Errors} is to show that this assumption may be relaxed. Inserting noise into \eqref{eq: Q},
\begin{align}
    \tilde{\sigma}_{\textit{out}} & = \overline{\mc{E}}_{\tau}(V_\tau \dots  \overline{\mc{E}}_1 (V_1 \sigma_{\textit{in}} V_1^\dag) \dots V_{\tau}^\dag).
\end{align}
Suppose that every $\mc{E}_{t}^{r}$ is a Bernoulli channel (Definition \ref{def: Bernoulli noise}) and therefore can be written as an expectation value over a random variable $\chi_{rt}$, which is the corresponding entry of a random matrix $\chi \in \mathbb{F}_2^{N \times (\tau + 1)}$ that indexes whether each candidate error location throughout the algorithm observes the identity or pure noise. Likewise, $\chi_t$ gives a length $N$ vector distinguishing these components at time $t = 1, 2, \dots, \tau$, so $\mc{E}_{\chi_t}$ gives a composite channel acting on $\mc{R}$. By independence, the probability of any configuration $\chi$ is
\begin{equation}
\Pr(\chi) = \prod_{r, t} \Pr(\chi_{rt})
\end{equation}
such that we can write the output state as a classical distribution over density states $\tilde{\sigma}_{\textit{out}} = \E_{\chi}\left[ 
    \tilde{\mc{Q}}_{\chi}\left(\sigma_{\textit{in}}\right)
    \right]$ where
\begin{equation}
    \tilde{\mc{Q}}_{\chi}(\sigma_{\textit{in}}) = \overline{\mc{E}}_{\chi_t}(V_\tau \dots  \overline{\mc{E}}_{\chi_1} (V_1 \sigma_{\textit{in}} V_1^\dag) \dots V_{\tau}^\dag).
\end{equation}

We arrive at Definition \ref{def: Query fidelity} noting that $\mc{R}$ is just an ancillary system used to implement $O$ on the subsystem $\mc{AB}$.
\begin{definition}
    A channel $\mc{E} \in \textnormal{CPTP}(\mc{H})$ is Bernoulli if and only if it can be written as a mixture of an arbitrary noise channel $\mc{N}$ with probability $p$ and the identity map $\mc{I}$ with probability $1 - p$ for some $p \in [0, 1]$. 
    \begin{equation}
        \mc{E} = (1 - p)\mc{I} + p \mc{N}
    \end{equation}
Equivalently, we may define a random variable $\chi \sim \textnormal{Bernoulli}(p)$ such that $\mc{E} = \E_{\chi}[\mc{E}_{\chi}]$. 
    \label{def: Bernoulli noise}
\end{definition}

We now observe some properties of the QRAM that make it robust under the model of Bernoulli noise. Suppose during a query, we randomly sample some configuration $\chi$ and consider how it acts on a query with some address state $\ket{i}$, where $i \in \mathbb{F}_{2}^n$. This query to the $x_i$ memory cell succeeds if the bus qubit reaches the corresponding leaf router, copies the locally stored data, and is extracted along with the address qubits in such a way that no error occurs in the branch where this process occurs, $\mc{R}_i$ (Fig.~\ref{fig: Bucket-brigade scheme}). Because of the trivial backaction associated with $\chi_{rt} = 0$ for all $t$ and all $r \in \mc{R}_i$, at the conclusion of the circuit, the final state of the branch returns to $\ket{\mc{R}_i} = \ket{W}^{\otimes 2N}$.

If a router $r$ supports an error at some point in the circuit, then all queries to a set of branches $\{\mc{R}_k: r \in \mc{R}_k\}$ can be corrupted, the size of which is correlated with how close to the root of the tree $r$ is situated. This identifies a subspace of the address qubit Hilbert space, $\text{span}\{\ket{k}: r \in \mc{R}_k\}$ as being predisposed to encountering an error. Generalizing to more than one possible fault, for a given $\chi$, we let the subspace of the initial address Hilbert space that leads to an error-free branch as
\begin{equation}
    V_{\chi} = \text{span}\{\ket{i}: \chi_{rt} = 0 \ \forall t, r \in \mc{R}_i\}.
\end{equation}
The orthogonal subspace, $V_{\chi}^{\perp}$, denotes the subspace of $\mc{H}^{A}$ that leads to errors under $\chi$. We will also define the image of $V_{\chi} \otimes \{\ket{+}\}$ under $O$ as $\text{Im}(V_{\chi} \otimes \ket{+})$.
We might be tempted to simply estimate the expected fractional volume of $V_{\chi}$ in $\mc{H}^{\mc{A}}$ to bound the fidelity, but this requires that errors do not drive transitions between $V_{\chi}$ and $V_{\chi}^{\perp}$. Fortunately, errors for every $\chi$ map disjointly on these complementary subspaces
\begin{align}
    \tilde{\mc{Q}}_\chi: & \ V_\chi \otimes \ket{+} \ket{\mc{W}} \rightarrow \text{Im}(V_{\chi} \otimes \ket{+}) \otimes \mc{D}(\mc{H}^{\mc{R}}) \label{eq: Good subspace}\\
    \tilde{\mc{Q}}_{\chi}: & \ V_{\chi}^\perp \otimes \ket{+}\ket{\mc{W}} \rightarrow \mc{D}(\mc{V}_{\chi}^{\perp} \otimes \mc{H}^{\mc{B}} \otimes \mc{H}^{\mc{R}}) \label{eq: Bad subspace}
\end{align}
because address states that intersect errors in the tree entail necessarily orthogonal routing information from those that avoid the fault. We have let $\mc{D}(\cdot)$ denote the space of density matrices formed from convex combinations of $(\cdot)$. If an error occurs on some router $r$ located on the $l$th level of the bucket-brigade tree, the binary addresses that route to $r$ differ from those that avoid $r$ by at least one bit in the first $l - 1$ bits of routing information. The error on $r$ alone cannot alter these; an error would also have to occur further up the tree. However, we already assume that any such error leads all afflicted branches to be maximally corrupted and thus, if multiple errors occur in a branch, only the one closest to the root of the tree has any bearing.
\begin{figure}[hbtp]
    \centering
    \includegraphics[width=\columnwidth]{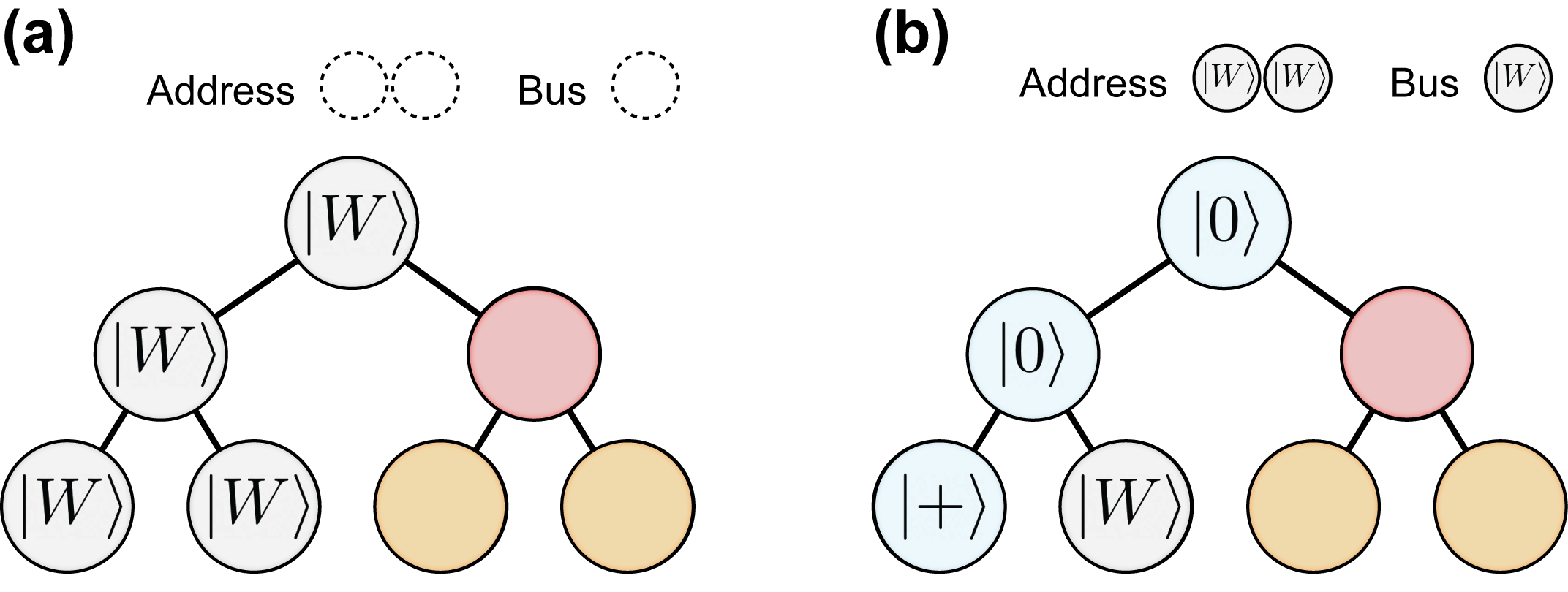}
    \caption{Constrained error propagation of a fault. The red router denotes the occurrence of an error. Even though all of its children may be error-free, we assume them all to be maximally corrupted, and thus highlight them in orange. We omit any indication of the post-error router state(s). (a) The empty-address query, $\widehat{Q}$. The gray routers remain in the wait state since no address qubits are injected into the tree. The wait state constrains the propagation of the error further up the tree. (b) A depiction of the query to $x_{0}$, where $\ket{00} \in V_{\chi}$ is the address index. Error propagation is also constrained here since the address diverges from the branches containing the fault.}
    \label{fig:Fault}
\end{figure}

One might also note that, unlike \eqref{eq: Bad subspace}, $\tilde{\mc{Q}}_{\chi}$ as written in \eqref{eq: Good subspace} preserves the separability of $\sigma_\textit{in}$ over the bipartition of $\mc{AB}$ from $\mc{R}$ in the $V_{\chi}$ subspace. This is a consequence of the constrained noise propagation afforded by the wait state in the three-level routers: the final state of the routers is identical regardless of which branch in $V_{\chi}$ is queried.

Let us define $\hat{Q}_\chi$ as a modified version of the noisy query circuit for each configuration $\chi$ where the address register is ``empty,'' which is analogous to removing the SWAP gates that span the $\mc{AB}$ and $\mc{R}$ bipartition, leaving the address and bus registers to idle noiselessly and nontrivial action locally on $\mc{R}$. Suppose that an error occurs on some $r$ on a level $l$ of the tree, which we can assume to be the closest in its branch to the root without loss of generality. This error may intersect multiple branches, but we can pick any $k$ to be any address where $r \in \mc{R}_k$. The parent router of $r$ remains in $\ket{W}$ at all times, which means that this fault can be routed at most around the subtree rooted by $r$. If we instead imagine querying to any memory cell $x_i$ where $\ket{i} \in V_{\chi}$, two cases emerge: either $i$ and $k$ differ before the $(l-1)$th bit, in which case $r$'s parent router still always remains in $\ket{W}$, or they differ on this bit, in which case $r$'s parent is excited out of the wait state into the active subspace. Nevertheless, the orthogonality of $i$ and $k$ ensures that error propagations from $r$ to its parent router remain identically constrained. For all $\ket{i} \in V_{\chi}$, 
\begin{equation}
    \hat{\mc{Q}}_{\chi}\left(\ketbra{\mc{W}}{\mc{W}}\right) = \Tr_{\mc{AB}} \left[ \tilde{\mc{Q}}_{\chi} \left(\ketbra{i, +}{i, +} \otimes \ketbra{\mc{W}}{\mc{W}} \right) \right],
\label{eq: Three-level reference} \end{equation} 
from which it follows that when $\Tr_{\mc{R}}(\sigma_{\textit{out}}^{\chi})$ is expressed in the computational basis, the submatrix where both the row and column indices are addresses $i$ and $j$ where $\ket{i/j} \in V_{\chi}$ has rank 1, up to normalization. Letting $\Pi(V)$ denote the projector onto a subspace $V$,
\begin{equation}
\Pi(V_{\chi}) \ketbra{\psi_{\textit{out}}}{\psi_{\textit{out}}} \Pi^\dag(V_{\chi}) = \Pi(V_{\chi}) \Tr_{\mc{R}}(\sigma_{\textit{out}}^{\chi}) \Pi^\dag(V_{\chi}) \label{eq: Identical on subspace}
\end{equation}
\begin{theorem}
Let $\tilde{\mc{Q}}$ denote a query with single-router Bernoulli noise acting on $\mc{R}$ and single-qutrit noise acting on $\mc{A}$ and $\mc{B}$ after each timestep, where every error channel has a Kraus representation $\{K_{\mu}\}$ such that 
\begin{equation}
    \frac{1}{\dim K_0} \Tr\left[K_0^\dag K_0\right] \geq 1 - \varepsilon,
\end{equation}
which ensures that the error rate of the channel is at most $\varepsilon$. For any input state of the form
\begin{equation}
    \sigma_{\textit{in}} = \ketbra{\psi_{\textit{in}}}{\psi_{\textit{in}}} \otimes \ketbra{\mc{W}}{\mc{W}} \label{eq: Input state for three-level QRAM},
\end{equation}
the query fidelity can be bounded as
\begin{equation}
    1 - F \leq 4 \varepsilon (\tau + 1)(n + 1) \in O(n^2).
\end{equation}
\label{thm: Three-level, mixed-unitary robustness}
\end{theorem}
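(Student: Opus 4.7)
The plan is to compose the Bernoulli channel decomposition of Def.~\ref{def: Bernoulli noise} with the good/bad subspace identities~(\ref{eq: Good subspace})--(\ref{eq: Identical on subspace}) already developed in Section~\ref{sec: prelims}. The argument has three steps: (i) rewrite each local noise channel as a Bernoulli mixture of identity and a worst-case channel so that the noisy circuit becomes a classical mixture over fault configurations $\chi$; (ii) bound the infidelity conditional on each $\chi$ linearly in the weight of $\ket{\psi_\text{in}}$ outside the good subspace $V_\chi$; and (iii) bound the expected weight of $V_\chi^\perp$ by a union bound over the candidate (router, timestep) fault positions.

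For step (i), I would decompose the principal Kraus operator as $K_0 = \alpha I + \Delta$ with $\alpha = \Tr(K_0)/\dim K_0$ and $\Delta$ traceless, absorbing $\Delta$ together with all $K_{\mu\geq 1}$ into a single ``pure-noise'' branch $\mathcal{N}$ to obtain $\mathcal{E} = (1-p)\mathcal{I} + p\mathcal{N}$ for some $p \in [0,1]$. The hypothesis $\Tr[K_0^\dagger K_0]/\dim K_0 \geq 1 - \varepsilon$ together with the Kraus completeness $\sum_\mu K_\mu^\dagger K_\mu = I$ then yields $p = O(\varepsilon)$. This casts every router, address, and bus channel into Bernoulli form, so $\tilde{\sigma}_\text{out} = \E_\chi[\tilde{\mathcal{Q}}_\chi(\sigma_\text{in})]$ as in the preliminaries.

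For step (ii), fix a configuration $\chi$, let $P_\chi := \Pi(V_\chi) \otimes I^\mathcal{B}$, and set $a_\chi := \|P_\chi\ket{\psi_\text{out}}\|^2 = \|P_\chi\ket{\psi_\text{in}}\|^2$. Introducing a Stinespring dilation of $\tilde{\mathcal{Q}}_\chi$ with environment $E$, the good part of the input, $P_\chi\ket{\psi_\text{in}}\ket{\mathcal{W}}\ket{0}_E$, evolves as $O(P_\chi\ket{\psi_\text{in}}) \otimes \ket{\phi_g}$ with a normalized bath state $\ket{\phi_g}$ that is common to every $\ket{i} \in V_\chi$ by Eq.~(\ref{eq: Three-level reference}); the bad part remains inside $V_\chi^\perp \otimes \mathcal{H}^\mathcal{B} \otimes \mathcal{H}^\mathcal{R} \otimes \mathcal{H}_E$ by Eq.~(\ref{eq: Bad subspace}). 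Tracing over the tree and environment, the good-good block contributes exactly $a_\chi^2$ to $\bra{\psi_\text{out}}\Tr_\mathcal{R}(\cdot)\ket{\psi_\text{out}}$, the bad-bad block is nonnegative, and the cross terms are bounded in modulus by $2a_\chi(1-a_\chi)$ via Cauchy--Schwarz applied to the Schmidt expansion of the bad part. Collecting these estimates yields $1 - F_\chi \leq 4(1 - a_\chi)$, which accounts for the factor of $4$ in the theorem.

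Step (iii) is then a union bound: since branch $\mathcal{R}_i$ touches $n+1$ routers across $\tau+1$ noisy layers, $\Pr[\ket{i}\notin V_\chi] \leq O(\varepsilon)(n+1)(\tau+1)$, with the address/bus channels contributing analogous terms. Since $\E_\chi[1 - a_\chi] = \sum_i |\alpha_i|^2 \Pr[\ket{i}\notin V_\chi]$ and $\sum_i|\alpha_i|^2 = 1$, this gives the claimed polylogarithmic bound. The main obstacle, where most of the technical care will go, is step (i): the hypothesis is a Frobenius-norm bound on $K_0$, which is weaker than the sharper min-singular-value condition of Def.~\ref{def: Error rate}, so extracting a clean $p = O(\varepsilon)$ Bernoulli decomposition requires careful accounting of $|\alpha|^2$ versus $\Tr[K_0^\dagger K_0]/\dim K_0$ together with the completeness relation, without losing polynomial factors in $\varepsilon$.
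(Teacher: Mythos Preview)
Your steps (ii) and (iii) are essentially the paper's proof: it partitions $\mathds{1} = \Pi(V_\chi) + \Pi(V_\chi^\perp)$, uses Eqs.~(\ref{eq: Good subspace})--(\ref{eq: Identical on subspace}) to obtain $F \geq 2\,\E_\chi[a_\chi^2] - 1$ (equivalent to your $1-F_\chi \leq 4(1-a_\chi)$), and closes with the same union bound over the $(n+1)(\tau+1)$ candidate fault sites per branch. Your Stinespring packaging of step (ii) is a harmless repackaging of the same block-diagonal argument.

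Your step (i), however, is both unnecessary and incorrect. It is unnecessary because the theorem already \emph{hypothesizes} Bernoulli noise on $\mc{R}$: for $\mc{E} = (1-p)\mc{I} + p\mc{N}$ one may take $K_0 = \sqrt{1-p}\,\mathds{1}$, whence the Frobenius condition reads $1-p \geq 1-\varepsilon$ and $p\leq\varepsilon$ follows immediately with no decomposition required. It is incorrect as a general maneuver because splitting $K_0 = \alpha\mathds{1} + \Delta$ and ``absorbing $\Delta$ into $\mc{N}$'' does not produce a CPTP $\mc{N}$: the cross terms $\alpha\rho\Delta^\dag + \alpha^*\Delta\rho$ are not of Kraus form and cannot be absorbed into a completely positive remainder. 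A pure coherent rotation $K_0 = e^{i\kappa Z}$ already gives a counterexample: it satisfies your Frobenius hypothesis with $\varepsilon = O(\kappa^2)$ yet admits no Bernoulli decomposition with $p = O(\kappa^2)$. This is exactly why the paper separates Theorem~\ref{thm: Three-level, mixed-unitary robustness} (Bernoulli assumed) from Theorem~\ref{thm: Hann et al., 2021: Main} (general Hermitian $K_0$, requiring the more involved argument of \cite{Hann2021}) and from Theorem~\ref{thm: Coherent noise} (non-Hermitian $K_0$, yielding a strictly worse $O(n^4)$ bound). Drop step (i) entirely and your proof coincides with the paper's.
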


\begin{proof}
The query fidelity \eqref{eq: Query fidelity} may be rewritten as
\begin{align}
    F & = \E_{\chi}\left[ 
    \braket{\psi_{\textit{out}} | \sigma_{\textit{out}}^{\chi} | \psi_{\textit{out}}}
    \right] \\
    & \geq 2\E_{\chi} \left[
    \braket{\psi_{\textit{out}} | \Pi(V_{\chi}) \Tr_{\mc{R}} \sigma_{\textit{out}}^{\chi} \Pi^\dag(V_\chi) | \psi_{\textit{out}}}
    \right]  - 1 \\
    & \geq 2 \E_{\chi}\left[|\hspace{-2pt} \braket{\psi_{\textit{out}} | \Pi(V_{\chi}) | \psi_{\textit{out}}} \hspace{-2pt} |^2\right] - 1
\end{align}
by partitioning the identity as $\mathds{1} = \Pi(V_{\chi}) + \Pi(V_{\chi}^\dag)$ and recognizing the spectrum of $\Tr_{\mc{R}} \sigma_{\textit{out}}^{\chi}$ falls within the unit interval. The last inequality results from substituting \eqref{eq: Identical on subspace}. Because
\begin{align}
\E_{\chi} \left[|\braket{\psi_{\textit{out}} | \Pi(V_{\chi}) | \psi_{\textit{out}}}|^2\right]  & = \E_{\chi} \left[ \norm{\Pi(V_{\chi}) \ket{\psi_{\textit{out}}}} \right]
\end{align}
Every branch $i$ independently is error-free with probability $1 - \varepsilon \tau n$, so
\begin{align}
\sum_{i \in \mathbb{F}_2^n} \Pr(\ket{i} \in V_{\chi}) |\alpha_i|^2 & = (1 - \varepsilon (\tau + 1)(n+1))^2 \\
& \geq 1 - 2\varepsilon (\tau+1)(n+1),
\end{align}
and thus,
\begin{equation}
    F \geq 1 - 4 \varepsilon (\tau + 1)(n + 1).
\end{equation}
\end{proof}

Generalizing beyond Bernoulli channels encounters several obstacles from our previous proof. First, if $K_0$ is not proportionate to a unitary, then every noise channel cannot be written as a classically probabilistic ensemble of no-error and noise channels. Moreover, we obtained \eqref{eq: Good subspace} by assuming that $K_0$ perfectly preserves the binary address information for any branch acted upon only by $K_0$, requiring $K_0$ to be diagonal. Finally, if $K_0 \not\propto \mathds{1}$, then \eqref{eq: Identical on subspace} falters. Yet, a statement similar to Theorem \ref{thm: Hann et al., 2021: Main} is true.

\begin{theorem}
    \textbf{\textup{Hann \textit{et al.} (2021)}}: Let $\textnormal{CPTP}_{\varepsilon}(\mc{H})$ denote the set of single-router noise channels such that for every $\mc{E} \in \textnormal{CPTP}(\mc{H})$, there exists a Kraus representation $\{K_i\}_{\forall i} \in \textnormal{Kr}(\mc{E})$ such that $K_0$ is Hermitian \footnote{We note that in restricting $K_0$ to be Hermitian, we are stating a weaker version of the result presented in \cite{Hann2021}. As we discuss in Sec \ref{sec: coherent noise}, for channels where $K_0$ is not Hermitian, the definition of Equation~(D5) in \cite{Hann2021} implicitly requires that $\varepsilon$ scales inversely with $n$, whereas it is a constant for channels with Hermitian $K_0$. When proving Theorem~\ref{thm: Coherent noise}, we will be able to restore the statement of noise resilience to its full generality in addition to extending the result to coherent noise sources.} and upper bounds the error rate of $\mc{E}$ by $\varepsilon$. For any $\sigma_{\textit{in}}$ given by \eqref{eq: Input state for three-level QRAM} where all intermediate noise channels are of the form $\mc{E}_{t}^{r} \in \textnormal{CPTP}_{\varepsilon}(\mc{H})$, the query fidelity is bounded as
    \begin{equation}
        1 - F \leq A \varepsilon (\tau + 1) (n + 1) \in O(n^2).
    \end{equation}
where $A \sim 4$ is a constant.
    \label{thm: Hann et al., 2021: Main}
\end{theorem}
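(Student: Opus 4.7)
My plan is to generalize the Bernoulli-trajectory decomposition of Theorem~\ref{thm: Three-level, mixed-unitary robustness} by expanding each noise channel in its Kraus representation (using the distinguished Hermitian $K_0$ guaranteed by hypothesis). At each noise location $(r,t)$ I split $\mc{E}_t^r = \mc{E}_t^{(0),r} + \mc{E}_t^{(\neq 0),r}$, where $\mc{E}_t^{(0),r}(\rho) = K_0 \rho K_0^\dag$ is trace-non-increasing and $\mc{E}_t^{(\neq 0),r}$ collects the remaining Kraus terms. Distributing this binary choice across every location gives $\tilde{\mc{Q}} = \sum_\chi \tilde{\mc{Q}}_\chi$ indexed by $\chi \in \{0,1\}^{N \times (\tau+1)}$, where $\tilde{\mc{Q}}_\chi$ is a sub-channel applying $\mc{E}^{(0)}$ at locations with $\chi_{rt} = 0$ and $\mc{E}^{(\neq 0)}$ elsewhere. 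The quantity $\Tr[\tilde{\mc{Q}}_\chi(\sigma_{\textit{in}})]$ now plays the same structural role as the Bernoulli probability $\Pr(\chi)$, though it is no longer a literal classical mixture.

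With this in hand the address-subspace decomposition $V_\chi \oplus V_\chi^\perp$ survives unchanged: $V_\chi$ still collects the addresses whose branches avoid every $\chi_{rt} = 1$ location. The uniform bound $|\Re\langle K_0\rangle_\psi|^2 \geq 1 - \varepsilon$ from Def.~\ref{def: Error rate} together with Hermiticity forces the eigenvalues of $K_0$ to cluster near a common sign close to $\pm 1$, so $K_0$ cannot mix the active and passive subspaces of a router to leading order. Consequently the constrained error-propagation logic underpinning Eqs.~\eqref{eq: Good subspace}--\eqref{eq: Bad subspace} and the empty-address reference identity \eqref{eq: Three-level reference} carry over verbatim, with $K_0^{\otimes(\cdot)}\ket{\mc{W}}$ replacing the pristine $\ket{\mc{W}}$ on routers along error-free branches. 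In particular $\tilde{\mc{Q}}_\chi$ still sends $V_\chi$ and $V_\chi^\perp$ into orthogonal output sectors, and the associated contribution to $F$ factorizes into the $V_\chi$-weight of $\ket{\psi_{\textit{out}}}$ multiplied by a product of per-router $\langle K_0\rangle$ factors.

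The central new estimate is that this product of factors telescopes favourably. For each of the $m = O(n\tau)$ Kraus applications on an error-free branch, Hermiticity of $K_0$ promotes $|\Re\langle K_0\rangle_\psi|^2 \geq 1 - \varepsilon$ to a genuine singular-value lower bound on the Hermitian part of $K_0$, so the product over the branch is at least $(1-\varepsilon)^m \geq 1 - m\varepsilon$. Combining this with the Bernoulli-style branch count $\E_\chi\|\Pi(V_\chi)\ket{\psi_{\textit{out}}}\|^2 \geq 1 - \varepsilon(\tau+1)(n+1)$, and partitioning the identity exactly as in Theorem~\ref{thm: Three-level, mixed-unitary robustness}, then yields $1 - F \leq A\varepsilon(\tau+1)(n+1)$ for a small constant $A \sim 4$.

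The main obstacle is the third step: showing that these $\langle K_0\rangle$ factors truly telescope additively in $\varepsilon$. Without Hermiticity, $\langle K_0\rangle_\psi$ can be complex, and Def.~\ref{def: Error rate} only controls its real part; the imaginary parts, unconstrained by the error rate, could compound across the $O(n\tau)$ active locations and degrade the bound to something quadratic in $\varepsilon(\tau+1)(n+1)$. This is precisely the gap that the theorem's footnote flags and that Section~\ref{sec: coherent noise} closes by a different decomposition. Here the Hermiticity hypothesis is doing essential work, and the remaining ingredients amount to direct adaptations of the Bernoulli calculation already carried out in Theorem~\ref{thm: Three-level, mixed-unitary robustness}.
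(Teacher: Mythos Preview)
The paper does not prove this theorem; it explicitly defers to Appendix~D of \cite{Hann2021}, offering only a high-level sketch and, later in Section~\ref{sec: coherent noise}, the key telescoping estimate~\eqref{eq: PT Inner Product Term}. Your outline correctly matches that sketch: the Kraus expansion into $K_0$ versus $K_{\mu\neq 0}$ trajectories, the role of Hermiticity in forcing real $a(\psi)=\langle K_0\rangle_\psi$ so that $\prod_{t,r} a_{t,r} \geq (1-\varepsilon)^{(\tau+1)(n+1)/2}$, and the identification of the coherent-noise failure mode all align with what the paper describes.

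There is, however, a genuine gap in your second paragraph. You claim that Eqs.~\eqref{eq: Good subspace}--\eqref{eq: Bad subspace} and \eqref{eq: Three-level reference} ``carry over verbatim'' once $\ket{\mc{W}}$ is replaced by $K_0^{\otimes(\cdot)}\ket{\mc{W}}$, and that $\tilde{\mc{Q}}_\chi$ ``still sends $V_\chi$ and $V_\chi^\perp$ into orthogonal output sectors.'' The paper explicitly contradicts this just before the theorem statement: when $K_0\not\propto\mathds{1}$, \eqref{eq: Good subspace} fails because $K_0$ no longer perfectly preserves the binary address information on the active branch, and \eqref{eq: Identical on subspace} falters for the same reason. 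Concretely, $K_0\ket{\psi}=a(\psi)\ket{\psi}+b(\psi)\ket{\psi^\perp}$ with $|b|^2=O(\varepsilon)$, and the $b$-term acting on an address qubit stored in a router can flip a routing bit, leaking amplitude from a $V_\chi$ branch into $V_\chi^\perp$. Your replacement $\ket{\mc{W}}\to K_0^{\otimes}\ket{\mc{W}}$ handles only the passive routers; it does not address the $O(\sqrt{\varepsilon})$ corruption of routing information on the active branch. Bounding these leakage terms, and showing they do not spoil the coherence of the $V_\chi$ sector beyond $O(\varepsilon\tau n)$, is precisely what makes the full proof ``fairly involved''; your outline skips this step entirely.
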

\noindent The proof for this result is fairly involved, and we will not attempt to restate it here, instead referring the reader to  Appendix D of \cite{Hann2021}. In subsequent sections, we will make reference to its relevant parts only where necessary for our proofs. On a high level, we note that Theorem~\ref{thm: Hann et al., 2021: Main} relies on the fact that even in configurations where the routers in a specific branch are acted upon by a $K_0$ that is not proportional to $\mathds{1}$, the smallness of $\varepsilon$ implies that the distance $\norm{K_0 - \mathds{1}}$ is also small, as well as the fact that the parts of the query associated with error-free (up to backaction) branches remain highly coherent and are almost disentangled from the tree after the QRAM operation. We will show that similar statements can be made for the extended set of noise sources we consider.

\section{Extended noise resilience of QRAM}\label{sec:extended_noise_resilience}

Theorem ~\ref{thm: Hann et al., 2021: Main} closed a longstanding debate on the noise resilience of QRAM, which was first proposed in \cite{Giovannetti2008} and contested in \cite{Arunachalam_NJP_2015_robustness_of_QRAM}. On the other hand, in realistic quantum systems, possible sources of error may extend far beyond the local (single-router) uncorrelated noise models previously considered. However, we will now proceed to show that the natural noise resilience of QRAM also extends far beyond this error model. In particular, we demonstrate that QRAM is also resilient against arbitrary initialization errors, spatially correlated errors, as well as coherent errors. We emphasize that this high degree of noise resilience is highly unusual and is not \textit{a priori} expected for generic quantum operations, and owes largely to the particular entanglement structure of QRAM. This suggests further study of QRAM, and in particular binary routing trees, as a useful model for noise-resilient quantum operations. Furthermore, a fuller understanding of its noise resilience properties may provide a useful guide for experimental design.

\subsection{Arbitrary Initializations} \label{sec: Arbitrary initializations}
The wait states in the QRAM are useful for limiting incoherent error propagations, but can a high query fidelity maintained even without re-initializing the tree to $\ket{W}^{\otimes N}$ at the start of every query? This answer was partially answered in the affirmative in \cite{Hann2021} by introducing the idea of a two-level QRAM, where the wait state is obliviated, replacing qutrits with qubits in the active subspace. We briefly review their argument.

\begin{lemma} \textbf{\textup{Hann \textit{et al.} (2021)}}
    For single-router Bernoulli noise at all times during the query, initializing the routers to $\ket{\mc{R}} = \ket{0}^{\otimes N}$ gives the fidelity inequality 
    \begin{equation}
        1 - F \leq 2 \varepsilon (\tau + 1) (n+1)^2 \in O(n^3), \label{eq: Two-level scaling}
    \end{equation}
    which increments the degree of the polynomial. \label{lem: Hann et al. two-level}
\end{lemma}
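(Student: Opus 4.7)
The plan is to follow the Bernoulli-decomposition argument of Theorem~\ref{thm: Three-level, mixed-unitary robustness}, adjusting only the steps where the absence of the passive wait state changes the bookkeeping. Single-router Bernoulli noise still commutes with the expectation-over-configurations decomposition, so $\tilde{\sigma}_{\textit{out}} = \E_\chi[\tilde{\mc{Q}}_\chi(\sigma_{\textit{in}})]$, and the goal is once more to identify, for each $\chi$, a ``good'' address subspace $V_\chi$ on which the noisy query reproduces the ideal output, then lower bound $\E_\chi[\norm{\Pi(V_\chi)\ket{\psi_{\textit{out}}}}^2]$.

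The crucial difference from Theorem~\ref{thm: Three-level, mixed-unitary robustness} lies in how errors can propagate once every router is initialized in the active subspace. In the three-level case, the wait state ensured that a single-router fault could not leak outside the subtree rooted at the faulted router, so a branch $\mc{R}_i$ was corrupted only by errors landing on the $(n+1)$ routers of $\mc{R}_i$ itself. Initializing every router as $\ket{0}$ instead removes that insulation: the routing gate on a router not belonging to $\mc{R}_i$ now performs a real swap, so a fault elsewhere in the tree can ride along subsequent routing layers until it reaches $\mc{R}_i$ and spoils the routing of a later address bit. Accordingly, I would enlarge the set of critical (router, time) pairs defining $V_\chi$, declaring $\ket{i}\in V_\chi$ only when no error in $\chi$ appears at any location from which a fault could propagate into $\mc{R}_i$ before the query completes. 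A straightforward bookkeeping on the interaction graph then shows that for each $r \in \mc{R}_i$ the upstream cone of potential fault origins has size $O(n)$, inflating the number of critical locations per branch per timestep from $(n+1)$ to $(n+1)^2$.

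With the enlarged $V_\chi$ in hand, the analogues of Eqs.~(\ref{eq: Good subspace})--(\ref{eq: Identical on subspace}) are unaffected: address states in $V_\chi$ encode routing information orthogonal to that of any corrupted branch, and on $V_\chi$ the off-branch routers settle into a deterministic product state that factorizes cleanly from the address and bus registers (now living in the computational basis rather than $\ket{\mc{W}}$). A union bound over the enlarged critical set then yields $\E_\chi[\norm{\Pi(V_\chi)\ket{\psi_{\textit{out}}}}^2] \geq 1 - \varepsilon(\tau+1)(n+1)^2$, and plugging this into the same chain of inequalities as in Theorem~\ref{thm: Three-level, mixed-unitary robustness} delivers Eq.~(\ref{eq: Two-level scaling}). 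The main obstacle is verifying that the error-propagation cone enlarges the critical set by only a single factor of $(n+1)$ rather than any larger polynomial in $n$; this tight combinatorial bound is what preserves the polynomial (now cubic) scaling of the infidelity.
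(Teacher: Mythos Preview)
Your high-level plan and the final counting are right, but the mechanism you invoke to justify enlarging $V_\chi$ is not the one that actually operates. You argue that an off-branch fault can ``ride along subsequent routing layers until it reaches $\mc{R}_i$ and spoils the routing of a later address bit.'' In fact this never happens under query $i$: whenever the fault's upward propagation arrives at a router whose parent lies in $\mc{R}_i$, that parent's control has been set by the address and points to the \emph{in-branch} child, so the swap bypasses the fault and it remains off-branch. Thus every individual query $i\in V_\chi$ still succeeds, and your stated criterion (``no fault at a location from which it could propagate into $\mc{R}_i$'') is, read literally, vacuous for off-branch faults and does not shrink $V_\chi$ at all.

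What actually costs the extra factor of $(n+1)$ is \emph{decoherence between branches that each succeed}. The fault's final resting place within the tree depends on which branch is queried: if $\mc{R}_i$ passes through the region where the fault is migrating, the address bits stored there halt its upward climb earlier than in the empty-address query, whereas for a branch $j$ that avoids this region the propagation matches the empty query exactly. Hence $\braket{i,x_i|\Tr_{\mc R}\sigma_{\textit{out}}^\chi|j,x_j}\neq \alpha_i\alpha_j^*$ even with $i,j\in V_\chi$, and the paper explicitly notes that the analogue of Eq.~\eqref{eq: Identical on subspace} fails on $V_\chi$. The paper's fix is to pass to the mutually coherent subspace $V'_\chi\subseteq V_\chi$ of branches $i$ with $\mc{R}_i\cap\mc{S}_r=\emptyset$ for every faulted $r$, where $\mc{S}_r$ is the subtree rooted at the first right-child ancestor of $r$ (the region the fault can explore in the empty query). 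On $V'_\chi$ the fault propagates identically for all queries, which is precisely what makes your claim that ``off-branch routers settle into a deterministic product state'' true; that claim does not follow from the direct-corruption picture. The counting (at level $m$ there are $m$ routers $r$ with $\mc{S}_r\cap\mc{R}_k\neq\emptyset$) then yields the $(n+1)^2$ factor you anticipated.
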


\begin{proof}
    
We may continue to think about non-identity errors occurring probabilistically, with all possible error configurations indexed by $\chi \hspace{0pt}$. Equation \eqref{eq: Identical on subspace}, which was vital to the proof of Theorem \ref{thm: Three-level, mixed-unitary robustness}, no longer stands. Without wait states, queries to $i$ and $j$, where $\ket{i/j} \in V_{\chi}$, can both succeed in retrieving the desired memory data independently. Yet, an error located on some $r$ that is not in either branch can eliminate the coherence between these two queries in the reduced density matrix.
\begin{equation}
    \braket{i, x_i | \sigma_{\textit{out}}^{\chi} | j, x_j} \neq \alpha_i \alpha_j^*
\end{equation}
    This occurs because a router with $\ket{r_c} = \ket{0}$ exchanges the state stored in $r_h$ with $r_0$. The role of the wait state to inhibit endless fault propagation is now played in weaker fashion by $\ket{r_c} = \ket{0}$, which inhibits error propagations from $r_1$ but not from $r_0$. The final location of the fault within the tree may therefore be entangled with the branch being queried, introducing decoherence in the address-bus system. 

    In the empty-address query of $\hat{Q}\ket{0}^{\otimes 2N}$, let us assume a fault occurs on some $r$, and all of its ancestors are error-free. There exists some router $r'$ that is an ancestor of $r$ such that the support of the error remains confined to the subtree rooted by $r'$. Specifically, $r'$ is the first ancestor of $r$ that is the right output mode another router, and we denote the tree it roots by $\mc{S}_r$ (Fig. \ref{fig:Propagation dynamics}). 

    \begin{figure}[hbtp]
        \centering
        \includegraphics[width=\linewidth]{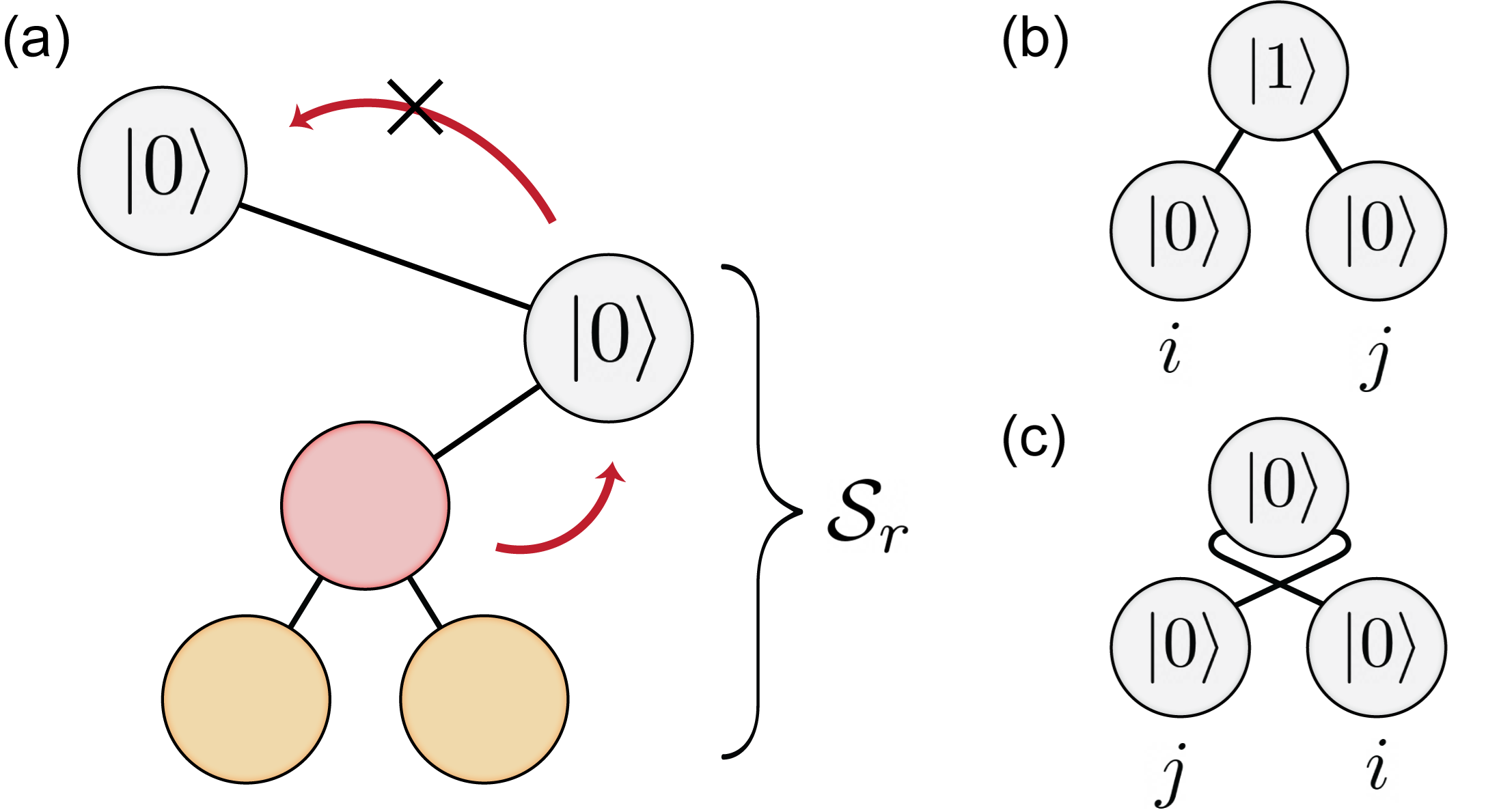}
        \caption{Error propagation dynamics for computational basis state initializations. (a) An error occurring on the red router can propagate upward one level since it is the left output mode of a $\ket{0}$ state router, but it does not propagate further. Thus, $\mc{S}_r$ is the set of routers corresponding to the subtree rooted by $r$'s parent. (b) An initialization with a router in the $\ket{1}$ state can be identified with (c) an initialization in entirely the $\ket{0}$ state up via a permutation symmetry on the branches.}
        \label{fig:Propagation dynamics}
    \end{figure}
    
    Decoherence occurs when the query to a memory node $x_i$ impedes this error propagation, for which $\mc{R}_i \cap \mc{S}_r$ is a necessary condition. Theorem \ref{thm: Three-level, mixed-unitary robustness} already factors in the effects of the noise on $r$ on the query to all $x_k$ where $r \in \mc{R}_k$, so this error can additionally only affect the coherence between $\Gamma_{\chi \hspace{0pt}}$ branches. For a configuration $\chi \hspace{0pt}$, queries where the address index is denoted by some $i$ in
    \begin{equation}
        V'_{\chi} = \text{span}\{\ket{i} \in V_{\chi}: i \in \mathbb{F}_2^n \ \text{s.t.} \ \mc{S}_r \cap \mc{R}_i = \emptyset\}
    \end{equation}
    both succeed and remain mutually coherent. Using the same approach as in Section \ref{sec: Three-level QRAM Introduction}, we obtain the correct analogues of \eqref{eq: Good subspace}, \eqref{eq: Bad subspace}, and \eqref{eq: Identical on subspace}, where we simply replace $V_{\chi}$ with $V_{\chi}'$.

In essence, the component of $\ket{\psi_{\textit{in}}}$ that lies along $V_{\chi} \setminus V'_{\chi}$ is now identified with the erroneous subspace due to the fact that errors propagate inconsistently, leading to incoherence. We will refer to $V'_{\chi}$ as mutually coherent subspace. All of the desired properties of $V_{\chi}$ used in the proof of Theorem \ref{thm: Three-level, mixed-unitary robustness} are now guaranteed to be truly only of $V_{\chi}^\perp$. We inspect that for any $\mc{R}_k$, there are $m$ routers $r$ on the $m$th level of the tree where $S_r \cap \mc{R}_k \neq 0$, so
\begin{equation}
    \Pr(\ket{i} \in V_{\chi}') \geq 1 - \frac{\varepsilon \tau}{2}(n+2)^2.
\end{equation}
for each individual query. 
After substitution, we obtain the statement of the lemma. 
\end{proof}
The advantage of initializing the QRAM to the ground state of the two-level subspace has the advantage that it does not require logical qutrits to comprise routers. However, it still requires that all routers in the tree are equipped with measurement and reset devices, which is a limiting assumption on current hardware. More conveniently, a query could be performed with the QRAM tree initialized to any arbitrary, unknown state $\rho^{\mc{R}}$. In this setting, a modification must be made to the query circuit to ensure \textit{no extra copying}: the data-copying unitary should act trivially on modes that do not contain the bus qubit \cite{Hann2021}. This was ensured in previous instances by the fact that $\ket{W}$ and $\ket{0}$ are +1 eigenstates of $Z$, hence the choice of basis for the bus. However, this guarantee is lost if the initialization is unknown. We resolve this issue by querying the circuit twice, as $Q^2 = \mathds{1}$ in a process we call ``query doubling.'' Unintended data copying operations that are residually left behind in the tree after the first query are uncomputed after the second. It would be counterproductive if this also reverted the bus, which we entangle with an ancillary qubit between the queries. Letting $\mc{B}'$ denote the new bus qubit the new circuit reads
\begin{equation}
     Q' = Q \left(\text{CX}^{\mc{B}, \mc{B}'}\right) Q \label{eq: Two-level query circuit}
\end{equation}
Since the query depth doubles, under mixed-unitary noise, so does the infidelity. 

Compared to the three-level QRAM, when the routers are in the all-zero state (or equivalently, any computational basis state), there is an additional unwanted propagation of errors due to the fact that the initial state of the tree is already associated with a particular routing instruction -- every router routes left by default. However, since this routing instruction only identifies one branch (the all-zero address) with the root of the tree, the unwanted propagation of errors remain highly constrained and identical across most branches of the tree, as shown in the above proof.

Conversely, what if the tree is initialized in an arbitrary state? For instance, consider $\ket{\mc{R}} = \ket{+}^{\otimes N}$. Error propagations are now entangled with the individual states of each router, as each router by default routes an error into a superposition of two different locations. Clearly the unwanted propagation of errors is not as highly constrained as before. Fortunately, as we now proceed to show, we may still, albeit using a different proof technique, bound the infidelity of the QRAM. 

\begin{theorem}
    Under a noisy realization of the circuit \eqref{eq: Two-level query circuit}, a two-level QRAM with the routers initialized to an arbitrary $\rho^{\mc{R}}$ satisfies the infidelity bound 
    \begin{equation}
        1 - F \leq 4 \varepsilon (\tau + 1) (n+1)^2 \label{eq: Arbitrary scaling}
    \end{equation}
    under the same noise model as Lemma \ref{lem: Hann et al. two-level}.
    \label{thm: Two-level error scaling, complete}
\end{theorem}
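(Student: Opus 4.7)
The plan is to reduce the theorem to pure initial tree states via linearity of the query fidelity, and then to extend Lemma~\ref{lem: Hann et al. two-level} to arbitrary pure states by combining its permutation symmetry with the noiseless query-doubling identity $Q^2 = \mathds{1}$. Since $F(\rho^{\mathcal{R}})$ is a linear functional of $\rho^{\mathcal{R}}$, the spectral decomposition $\rho^{\mathcal{R}} = \sum_k \lambda_k \ket{\phi_k}\bra{\phi_k}$ with $\lambda_k \geq 0$ yields $1 - F(\rho^{\mathcal{R}}) = \sum_k \lambda_k (1 - F(\ket{\phi_k}))$, so it suffices to establish the bound for arbitrary pure tree initializations $\ket{\phi}$.

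For a pure $\ket{\phi}$, I would introduce the positive semidefinite operator $\tilde{F}$ on $\mathcal{R}$ defined by $F(\ket{\phi}) = \bra{\phi}\tilde{F}\ket{\phi}$, whose matrix elements in the computational basis of the routers encode the linear dependence of the fidelity on the tree initialization; the task reduces to bounding $\|\mathds{1} - \tilde{F}\|_{\text{op}}$. The diagonal entries are controlled by the permutation symmetry of Fig.~\ref{fig:Propagation dynamics}(b,c): swapping the children of each router in state $\ket{1}$ converts any computational basis initialization $\ket{s}$ into $\ket{0}^{\otimes N}$ initialization, leaving the single-router Bernoulli noise model invariant. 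With query doubling enforcing the no-extra-copying condition, Lemma~\ref{lem: Hann et al. two-level} then gives $1 - \bra{s}\tilde{F}\ket{s} \leq 4 \varepsilon (\tau+1)(n+1)^2$, with the extra factor of two accounted for by the doubled query depth.

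To bound the off-diagonal entries, and hence $\|\mathds{1} - \tilde{F}\|_{\text{op}}$ itself, I would exploit the fact that in the noiseless limit $Q^2 = \mathds{1}$ returns the tree to its initial state regardless of routing configuration, so the effective channel $\mathcal{C}: \rho^{\mathcal{R}} \mapsto \Tr_{\mathcal{BR}}[\tilde{\mathcal{Q}}'(\sigma_{\textit{in}} \otimes \rho^{\mathcal{R}})]$ collapses to the \emph{constant} channel $\mathcal{C}_0(\rho) = \ket{\psi_{\textit{out}}}\bra{\psi_{\textit{out}}}\Tr(\rho)$. Since $\tilde{F} = \mathcal{C}^*(\ket{\psi_{\textit{out}}}\bra{\psi_{\textit{out}}})$ while $\mathcal{C}_0^*(\ket{\psi_{\textit{out}}}\bra{\psi_{\textit{out}}}) = \mathds{1}^{\mathcal{R}}$, Hölder's inequality applied to $\bra{\phi}(\mathcal{C} - \mathcal{C}_0)^*(\ket{\psi_{\textit{out}}}\bra{\psi_{\textit{out}}})\ket{\phi}$ delivers $\|\mathds{1} - \tilde{F}\|_{\text{op}} \leq \|\mathcal{C} - \mathcal{C}_0\|_{\diamond}$.

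The main obstacle is controlling this channel-norm deviation uniformly in the initialization without incurring an exponential-in-$N$ cost, since the PSD bound $|(\mathds{1} - \tilde{F})_{s,s'}|^2 \leq (\mathds{1} - \tilde{F})_{s,s}(\mathds{1} - \tilde{F})_{s',s'}$ alone only yields an entrywise bound that summed over the exponentially large basis becomes vacuous. To resolve this, I would decompose $\mathcal{C} = \E_\chi \mathcal{C}_\chi$ over Bernoulli noise configurations $\chi$ and argue that for each $\chi$ the deviation $\mathcal{C}_\chi - \mathcal{C}_0$ acts nontrivially only on the $O(n)$-dimensional address subspace of branches affected by the faults; the constrained propagation of errors from the active routers, together with the undoing action of the second $Q$ in the doubled circuit, ensures that each configuration contributes at most the Lemma~\ref{lem: Hann et al. two-level} per-branch cost independently of the tree input, and averaging over $\chi$ then reproduces $\|\mathcal{C} - \mathcal{C}_0\|_{\diamond} \leq 4\varepsilon(\tau+1)(n+1)^2$.
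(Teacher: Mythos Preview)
Your reduction to pure states by linearity and the diagonal bound via the branch-permutation symmetry (Fig.~\ref{fig:Propagation dynamics}(b,c)) match the paper. The divergence---and the genuine gap---is in how you handle the off-diagonal part of $\tilde F$.

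Your diamond-norm route is not wrong in principle, but the last paragraph does not supply the missing mechanism. The claim that $\mathcal{C}_\chi - \mathcal{C}_0$ ``acts nontrivially only on the $O(n)$-dimensional address subspace of branches affected by the faults'' conflates input and output: $\mathcal{C}_\chi$ has domain $\mathcal{D}(\mathcal{H}^{\mathcal{R}})$ of dimension $2^{2N}$, and the diamond norm is a supremum over that input (with ancilla). Even if the \emph{output} deviation lived in a small address subspace, this says nothing about uniformity over tree inputs. More importantly, a single fault at level $\ell$ already corrupts $2^{n-\ell}$ branches, so the ``$O(n)$-dimensional'' claim is not correct even for the output. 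Your final sentence---that each configuration contributes the Lemma~\ref{lem: Hann et al. two-level} cost ``independently of the tree input''---is exactly the content of the theorem: for superposition initializations, the location to which a fault propagates is \emph{entangled} with the tree's computational-basis components, and you have given no argument for why this entanglement does not spoil the fidelity when you trace out $\mathcal{R}$.

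The paper closes this gap by a different and more structural argument. It does not try to bound the off-diagonals of $\tilde F$ directly. Instead it (i) averages over random relative phases on the computational-basis components of $\ket{\mathcal{R}}$, which kills the off-diagonals and reduces to the diagonal bound you already have; and then (ii) shows that the fidelity is in fact \emph{constant} along each orbit of diagonal phase gates $\Lambda = \sum_w e^{i\theta_w}\ketbra{w}{w}$. Step (ii) is the key idea you are missing: one pushes all noise to the end of the circuit as $\overline{\mathcal{X}}_\chi$, then observes that on the good subspace the propagated Kraus operators take the form $\sum_{i,w}\ketbra{i}{i}^{\mathcal{A}}\otimes J_{\chi,\mu}^{\mathcal{R}}\ketbra{w}{w}$ with $J_{\chi,\mu}$ independent of $i$ for $\ket{i}\in V_\chi'^{w}$. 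Conjugation by $\Lambda$ sends $J_{\chi,\mu}\mapsto \Lambda J_{\chi,\mu}\Lambda^\dagger$ but preserves the $i$-independence, hence preserves the coherence of the good branches and the fidelity bound. This phase-invariance argument is what replaces your unproven diamond-norm estimate.
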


\begin{proof}
We will temporarily assume the router state to be pure. The subspace of address states $V_{\chi}$ is independent of the choice of basis state initializations; however, the coherent branches for a given $\chi$ now depend on $w$ as routers may propagate errors from different locations. Nonetheless, it is easy to see that the analysis for the $\ket{0}^{\otimes 2N}$ initialization immediately extends to any $\ket{\mc{R}} = \ket{w}$, $w \in \mathbb{F}_2^n$ by the branch-permutation symmetry of probabilistic fault locations shown in Fig. \ref{fig:Propagation dynamics} (b)-(c).
The $\ket{0}^{\otimes 2N}$-initialized and $\ket{w}$-initialized trees are equivalent up to local bit-flips and permutations on the classical data, both of which leave $\dim V_{\chi}'^w$ invariant over all $w$. 
    \begin{align}
        f(w, w') := \sum_\chi \Tr \bra{\psi_{\textit{out}}} \tilde{\mc{Q}}_{\chi} \left(
        \ketbra{\psi_{\textit{in}}}{\psi_{\textit{in}}} \otimes \ketbra{w}{w'} \right) \ket{\psi_{\textit{out}}} \nonumber
    \end{align}
    For a probability distribution over computational basis states, $p(\omega)$. We will let $R(p_w)$ denote this subset of the Hilbert space.
    \begin{equation}
        R(p_w) = \left\{\sum_{w} \varphi_w \sqrt{p_w} \ket{w} \in \mc{H}^{\otimes 2 N}: \varphi_w \in \mathbf{U}(1)\right\} 
    \end{equation}
    For any probability distribution $p_w$,
    \begin{align}
        \EF_{\ket{\psi} \sim R(p_w)} & = \sum_{w, w'} \int_{\mathbf{U}(1)^2} \varphi_w \varphi_{w'}^* \sqrt{p_w p_{w'}} f(w, w') d(\varphi_w, \varphi_{w'}). \nonumber
    \end{align}
    where $d(\varphi_w, \varphi_{w'})$ denotes the Haar measure over $\mathbf{U}(1)^2$. 
    We can separate $w' \neq w$ and $w' = w$ terms. In the former, $\varphi_w$ and $\varphi_{w'}$ terms vary independently over $\mathbf{U}(1)$, and therefore, for any $(w, w')$ pair, obviously
    \begin{equation}
        \int_{\mathbf{U}(1)^2} \varphi_w \varphi_{w'}^* \sqrt{p_w p_{w'}} f(w, w') d(\varphi_{w}, \varphi_{w'}) = 0.
    \end{equation}
    This leaves only 
    \begin{align}
        \EF_{\ket{\psi} \sim R(p_w)}  = \E_{w} [f(w, w)] \geq 1 - 4 \varepsilon (\tau + 1)(n+2)^2
    \end{align}
    since the lower bound on $f(w, w)$ is constant for every $w$. The factor of two increase relative to \eqref{eq: Two-level scaling} simply arises from the query doubling protocol. We will demonstrate by virtue of the fact that if this bound holds for any $\ket{\mc{R}_1} \in R(p_w)$, it must also for all initializations in $R(p_w)$. For any choice of $p_w$, any $\ket{\mc{R}_1}, \ket{\mc{R}_2} \in R(p_w)$ can be related to one another by $\ket{\mc{R}_2} = \Lambda \ket{\mc{R}_1}$, where $\Lambda$ is a diagonal phase gate in the $\{\ket{w}\}$ basis.
    \begin{equation}
        \Lambda = \sum_{w} \frac{\braket{w | \mc{R}_2}}{\braket{w | \mc{R}_1}} \ketbra{w}{w}
    \end{equation}
    We can propagate all the errors occurring at intermediate times until the end of the circuit, obtaining a nonlocal equivalent noise channel $\mathcal{X}_{\chi}$ that occurs after a successful query. 
    \begin{equation}
        \tilde{\mc{Q}}_{\chi}(\rho) = \overline{\mc{X}}_{\chi}(O^{\mc{A}} \rho O^{\dag \mc{A}})
    \end{equation}
    This way, all errors can be absorbed into a CPTP channel across the composite system occurring after an ideal query.
    \begin{equation}
        F_1 = \E_{\chi} \left[ 
        \braket{\psi_{\textit{out}} | \Tr_{\mc{R}}\left[
        \overline{\mc{X}}_{\chi}\left(\ket{\psi_{\textit{out}}}\ket{\mc{R}_1}(h.c.) \right)
        \right]}
        \right]
    \end{equation}
    By the invariance of the above partial trace under local unitaries, the equivalent equation for $F_2$ entails the replacement $\overline{\mc{X}}_{\chi}(\cdot) \rightarrow \Lambda \overline{\mc{X}}_{\chi}(\Lambda (\cdot) \Lambda^\dag) \Lambda^\dag$. We are interested in whether this affects the address-router entanglement in the $\mc{D}(\text{Im}(V_{\chi} \otimes \ket{+}) \otimes \mc{H}^{\mc{R}})$ subspace. In this subspace, $\overline{\mc{X}}_{\chi}$ has a Kraus representation $\overline{K}_{\chi, \mu}$ where
    \begin{widetext}
    \begin{equation}
        \Pi(\text{Im}(V_\chi \otimes \ket{+})) \overline{K}_{\chi, \mu}\Pi(\text{Im}(V_\chi \otimes \ket{+})) = \sum_{i \in \Gamma_{\chi \hspace{0pt}}} \sum_{w \in \mathbb{F}_2^n}\ketbra{i}{i}^{\mc{A}} \otimes \begin{cases} \overline{J}_{\chi, \mu}^{\mc{R}} \ketbra{w}{w}^{\mc{R}} & \ket{i} \in V_{\chi}'^w \\
        \overline{L}_{\chi, \mu}^{\mc{R}}(i) \ketbra{w}{w}^{\mc{R}} & \ket{i} \notin V_{\chi}'^w \\
        \end{cases}
        \label{eq: Propagated error in subspace}
    \end{equation}
    \end{widetext}
based on our analysis of error propagations. Importantly, we have made no further assumptions about the Kraus operators $\overline{J}_{\chi, \mu}$ other than their $i$-independence (for a given $w$). Any choice guarantees that for a fixed $w$ and $\ket{\phi} \in V_{\chi}'^{w}$ 
\begin{equation}
    \overline{K}_{\chi, \mu}^{\mc{AR}}\ket{\phi}\ket{w} = \ket{\phi}\overline{J}_{\chi, \mu}^{\mc{R}}\ket{w} \label{eq: Common unitary equivalence of final state}
\end{equation}
which is the key observation that guarantees the relative coherence of most error-free branches.
 Repeating the calculation with $\Lambda \overline{K}_{\chi, \mu} \Lambda$, the diagonality of $\Lambda$ in the $\{\ket{w}\}$ basis yields the same formula but with the replacement $J_{\chi, \mu} \rightarrow \Lambda J_{\chi, \mu} \Lambda^\dag$ and likewise with $L_{\chi, \mu}(i)$. Nonetheless, this still fits the form of \eqref{eq: Common unitary equivalence of final state}. This implies that the $\ket{\mc{R}_2}$ initialization is subject to the same lower bound as $\ket{\mc{R}_1}$.

 To review, we have shown that for any probability distribution $p_w$, a randomly sampled state in $R(p_w)$ obeys \eqref{eq: Arbitrary scaling}. We subsequently demonstrated that all states in $R(p_w)$ obey the same inequality as a state sampled uniformly at random from the distribution. Together, these include all parameterizations of an arbitrary pure state tree initialization $\ket{\mc{R}}$. Since the fidelity is linear in $\rho^{\mc{R}}$, it trivially follows that a classical ensemble of pure states will too, and thus Theorem \ref{thm: Two-level error scaling, complete} is proven for any arbitrary density matrix $\rho^{\mc{R}}$ that is initially unentangled with the address and bus qubits.
    \end{proof}

Even though we have assumed in Theorem \ref{thm: Two-level error scaling, complete} that the routers are initialized within the qubit space, an arbitrary three-level initialization would fare at least as well since wait states only help to constrain error propagations. Quantitatively, the explicit form of \eqref{eq: Propagated error in subspace} allows for us to generalize to sampling $w \in \mathbb{F}_3^{2N}$. But recall the wait state is defined only by the fact that it operates as a leakage state that is unaffected by the qubit subspace. This is, by definition, true of the entire Hilbert space excluding the qubit subspace---any of these states effectively operate as wait states. 

\begin{corollary}
    As an extension of Theorem \ref{thm: Two-level error scaling, complete}, the routing tree may be comprised of qudits of any, including infinite, dimension and initialized arbitrarily while still obeying the same fidelity bound. \label{col: No-reset QRAM}
\end{corollary}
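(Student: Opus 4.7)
The plan is to extend the argument of Theorem \ref{thm: Two-level error scaling, complete} by replacing the sum over the qubit computational basis $\mathbb{F}_2^{2N}$ by a sum over the full qudit computational basis. The key observation, already flagged in the paragraph preceding the corollary, is that any basis state outside the active qubit subspace operates as a leakage state unaffected by the routing operation, and hence behaves as a wait state. Such sectors can only constrain error propagation at least as strongly as the qubit sector, so no part of the Theorem \ref{thm: Two-level error scaling, complete} argument should deteriorate when the router dimension is increased beyond two.

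First, I would generalize $R(p_w)$ so that each $w$ indexes the computational basis of a qudit router of arbitrary (possibly infinite) dimension, including all leakage labels beyond $\{0,1\}$. Any pure initialization admits a decomposition $\ket{\mc{R}} = \sum_w \varphi_w \sqrt{p_w} \ket{w}$ with $\varphi_w \in \mathbf{U}(1)$ and $p_w$ a probability distribution. Haar-averaging the fidelity over these independent $\mathbf{U}(1)$ phases kills all off-diagonal $w \neq w'$ contributions exactly as in the proof of Theorem \ref{thm: Two-level error scaling, complete}, reducing the analysis to bounding $\E_w[f(w,w)]$.

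Second, I would verify $f(w,w) \geq 1 - 4\varepsilon(\tau + 1)(n+1)^2$ for every computational basis state $\ket{w}$ of the extended alphabet. Router positions occupying the qubit subspace contribute exactly as in Lemma \ref{lem: Hann et al. two-level}, whereas positions occupying leakage states block error propagation in the stronger manner of Theorem \ref{thm: Three-level, mixed-unitary robustness}, so the coherent-query subspace $V_\chi'^w$ can only grow relative to its qubit-case counterpart. In either case, the propagated-error Kraus operators retain the structure of \eqref{eq: Propagated error in subspace}, and the crucial factorization \eqref{eq: Common unitary equivalence of final state} remains intact, guaranteeing that error-free branches remain mutually coherent.

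Third, I would lift the bound from the Haar-averaged fidelity $\E_{\ket{\psi} \sim R(p_w)} F$ to every fixed pure state in $R(p_w)$ by reusing the diagonal-phase-gate $\Lambda$ argument of Theorem \ref{thm: Two-level error scaling, complete}: conjugating \eqref{eq: Propagated error in subspace} by any $\Lambda$ diagonal in the generalized $\{\ket{w}\}$ basis preserves the $i$-independence of $\overline{J}_{\chi,\mu}$. Linearity of the fidelity in $\rho^{\mc{R}}$ then extends the bound to arbitrary, possibly mixed, initializations. The main obstacle I expect is verifying the structural form of \eqref{eq: Propagated error in subspace} when basis labels mix qubit and leakage sectors at different routers along a single branch: one must confirm that no cross-sector propagation path correlates the Kraus operator $\overline{J}_{\chi,\mu}$ with the queried address $i$. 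Once this is in hand, the Haar-averaging and bound-preservation steps from Theorem \ref{thm: Two-level error scaling, complete} carry over verbatim, yielding the corollary.
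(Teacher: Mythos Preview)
Your proposal is correct and follows essentially the same route as the paper: the paper's justification is the brief paragraph preceding the corollary, which notes that the explicit form of \eqref{eq: Propagated error in subspace} generalizes to $w$ ranging over an enlarged alphabet and that every non-active basis state functions as a wait state, only tightening error confinement. You have simply spelled out in more detail how the Haar-averaging, the per-$w$ bound on $f(w,w)$, and the $\Lambda$ argument carry over verbatim once $w$ is allowed to index the full qudit basis, which is exactly what the paper leaves implicit.
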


The consequence of Corollary \ref{col: No-reset QRAM} is that in any architecture, the routing tree requires no reset or (weak) measurement protocols. In fact, this routing tree can be engineered without any measurement functionality whatsoever and still remain highly robust as a QRAM. This observation addresses a concern raised in \cite{Jaq2023} regarding the persistent buildup of errors in a QRAM forcing the need for router reset capabilities. Given the challenge of multiplexing measurement devices in scaling quantum processors, we expect that this architectural simplification to aid near-term implementations.

Our focus has resided on cases where we have Bernoulli errors in the tree. We later construct a Pauli twirling protocol in Section \ref{sec: Pauli twirling} that 
tailors all noise to this form, allowing for this analysis to be applicable in general. However, it is also worth mentioning that this analysis
could also extend to more general noise. The nontrivial component of considering $K_0 \not\propto \mathds{1}$ is that a small component of the query branch 
leaks into other branches of the tree. However, with arbitrary initializations, we examine only how distinct addresses can affect the routing of errors in the 
``bulk'' of the tree. Since the tree is already in some unconstrained state, a perturbation introduced by $K_0$ away from the branch being queried just result in some
other arbitrary tree state with minimal entanglement. We leave this extension to further work.

\subsection{Spatially correlated errors} \label{sec: Correlated Errors}
Thus far, we have exclusively modeled single-qubit noise that occurs after every qubit as a toy model for decoherence in QRAM. In practice, however, noise may be correlated across several routers, e.g., due to crosstalk or miscalibrated routing gates. 

Highly non-local noise correlations whose strength does not decay in the level difference between the afflicted routers can easily corrupt the fidelity of the tree. However, quasi-local noise that respects the distribution of noise sources in the tree structure could by hypothesized to still support the robustness of the error. Here, we explicitly show this to be the case in three-level QRAM systems, but this same analysis can be used to show that two-level schemes are persistent in their noise robustness under correlated noise.

\begin{definition}
    A quasi-local error $\mc{E}_t^{H}$ is \textnormal{CPTP} map acting on a connected subgraph $H = G[\text{supp}(\mc{E}_t^H)]$, where $\textnormal{supp}(\cdot)$ denotes support on quantum routers rather than their constituent qudits.
\end{definition}

For a finite quasi-locality degree $L$, we perform a cluster expansion on error channels forming connected subgraphs within the bucket-brigade tree. The error following every layer of gates is now modeled as
\begin{equation}
    \mc{E}_t(\rho) =  \hspace{-20pt} \prod_{\substack{G[S] \subseteq G \\ |S| <   L \\ G[S] \ \text{is connected}}} \hspace{-20pt} \mc{E}_{t, S}^{E}(\rho) \label{eq: Cluster expansion}
\end{equation}
where the order of composed channels matters since they may be noncommuting in general.  We demonstrate that the robustness of bucket-brigade QRAM persists, up to an $L$-dependent
rescaling of the error rate. The simplest case occurs when each error channel is Bernoulli. The lack of correlated backaction from the multiqubit $K_0$ operator allows us
to return to the counting argument of \cite{Hann2021}. Every local error cluster has exactly one supported router that is closest to the root of the tree. In the preceding analyses, we already assumed that
any error on a router immediately compromises all branches containing it. Here, we extend this by reducing correlated error channels effectively to local errors on the supported router highest 
in the tree. Clearly, the error rates are skewed up the tree, allowing the effective error rates at any router to be bounded by a function exclusively of $L$.

A more interesting error model is a correlated variant of the noise in Theorem \ref{thm: Hann et al., 2021: Main}, where we maintain the Hermiticity of all $K_0$. First, we remark that the noise robustness of the query persists in higher-dimensional variants of the bucket-brigade QRAM. We previously framed each router as resolving routing information to two output modes, which together synthesize a binary tree. Similarly, $D$-dimensional qudits can comprise the address registers and the tree routers, admitting at most a $(D - 1)$-ary routing tree if accommodating at least one wait state, or otherwise a $D$-ary tree. As a matter of concreteness, we will work through the former case with wait states. There are now $D$ basis states,
\begin{equation}
    \ket{W}, \ket{0}, \dots, \ket{(D - 1)}
\end{equation}
where the wait state functions exactly as it did before, and a computational basis state uniquely corresponds to each ``leg'' of the router. Error analyses on these trees are identical to the ones performed for qutrits. Decoherence from the no-error backaction is still bounded, and errors along the path to a memory cell remain ``worst-case'' events. Additionally, the existence of a wait state still suppresses the propagation of auxillary errors in the tree. These are the only elements used in the proofs of Theorem \ref{thm: Three-level, mixed-unitary robustness} and \ref{thm: Coherent noise}. Moreover, one can repeat the proofs for non-Bernoulli noise with Hermitian $K_0$ identically to the presentation in \cite{Hann2021} for Theorem \ref{thm: Hann et al., 2021: Main}, yielding the subsequent lemma.

\begin{lemma}
    In a generalized bucket-brigade model, where the QRAM architecture involves $d$-dimensional routers, extensions to all preceding Theorems \ref{thm: Three-level, mixed-unitary robustness}, \ref{thm: Hann et al., 2021: Main}, \ref{thm: Two-level error scaling, complete}, and \ref{thm: Coherent noise} persist with the implicit replacement $\varepsilon \rightarrow \varepsilon_d$, where $\varepsilon_d$ is a rescaled error rate parameter. 
    \label{lem:corr_noise}
\end{lemma}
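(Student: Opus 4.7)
The plan is to carry each of Theorems \ref{thm: Three-level, mixed-unitary robustness}, \ref{thm: Hann et al., 2021: Main}, \ref{thm: Two-level error scaling, complete}, and \ref{thm: Coherent noise} over to the generalized setting by isolating exactly which ingredients of each proof are sensitive to the cardinality of a router's active subspace, and verifying that none of these structural features change beyond a harmless rescaling of the error-rate parameter. The three ingredients that matter are: the existence of a passive wait state that invariantly contains error propagation through any ancestor still residing in $\ket{W}$; the mutual orthogonality of the computational-basis routing instructions associated with distinct legs of each router, so that address states routed to different branches remain perfectly distinguishable; and the fact that an error event at a given location corrupts only those branches that actually traverse that router. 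All three properties hold verbatim for a $(D-1)$-ary tree of $D$-dimensional qudits with one wait state, since $\ket{W}$ is characterized entirely by its invariance under active-subspace operations and not by how many active states coexist with it.

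With these observations in hand, I would first redefine the good subspace for the generalized tree as
\begin{equation}
V_\chi = \mathrm{span}\{\ket{i} : i \in \mathbb{F}_{D-1}^n,\ \chi_{rt} = 0 \ \forall\, t,\, r \in \mc{R}_i\},
\end{equation}
and similarly $V_\chi'$ for the two-level variant. The disjointness relations \eqref{eq: Good subspace}--\eqref{eq: Bad subspace} and the rank-one identity \eqref{eq: Identical on subspace} on $V_\chi$ follow from the wait-state argument of Section \ref{sec: Three-level QRAM Introduction} unchanged. Since each branch still traverses exactly $n$ routers and $\tau$ still scales as $\Theta(n)$, the union bound over single-router failure probabilities along a branch reproduces a bound of the form $\varepsilon_d \tau (n+1)$, so Theorem \ref{thm: Three-level, mixed-unitary robustness} lifts immediately. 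The arbitrary-initialization argument of Theorem \ref{thm: Two-level error scaling, complete} uses only the branch-permutation symmetry of the fault distribution and the diagonality of $\Lambda$ in the computational basis; both survive the lift because the routing unitary still merely permutes branch labels, now indexed by $\mathbb{F}_{D-1}$.

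For the non-Bernoulli Hermitian-$K_0$ case of Theorem \ref{thm: Hann et al., 2021: Main}, the perturbative analysis of \cite{Hann2021} only appeals to $\norm{K_0 - \mathds{1}}$ being small per router and to the combinatorics of branches intersecting a given router. Only the latter count changes, from $2^{n-\ell}$ to $(D-1)^{n-\ell}$, and the very same counting reproduces an $O(n^2)$ infidelity with a $D$-dependent constant absorbed into $\varepsilon_d$. For coherent noise (Theorem \ref{thm: Coherent noise}) the relative-phase analysis between Kraus operators on different routers only requires each router's local Hilbert space to be of fixed, constant dimension, which remains true here.

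The main obstacle is a careful bookkeeping of how Definition \ref{def: Error rate} itself behaves under the dimension lift. Because $\varepsilon$ is defined via $\min_{\{K_\mu\}} \max_\psi (1 - |\Re \langle K_0\rangle_\psi|^2)$ with $\psi$ ranging over a single router's control-plus-hold Hilbert space, enlarging $D$ enlarges the state space over which the worst-case contraction is evaluated, so the rescaled $\varepsilon_d$ must absorb a $D$-dependent but $n$-independent factor. In the coherent-noise case this dimensional factor must be tracked explicitly to ensure that $\varepsilon_d$ remains $O(1)$ in $n$; for Bernoulli noise and the Hermitian-$K_0$ case this independence is immediate.
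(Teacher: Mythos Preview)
Your proposal is correct and follows essentially the same structural argument as the paper: both observe that the only proof ingredients sensitive to router dimension are the wait-state containment of error propagation, the orthogonality of distinct routing instructions, and the branch-local impact of an error event, and that all three carry over verbatim to $(D-1)$-ary trees. The paper's treatment is terser—it simply asserts in the paragraph preceding the lemma that backaction decoherence remains bounded, path errors remain worst-case, and wait-state suppression persists, then cites \cite{Hann2021} for the Hermitian-$K_0$ case—whereas you walk through each of the four theorems individually and add an explicit discussion of how Definition~\ref{def: Error rate} itself rescales with $D$, which the paper leaves implicit in the phrase ``rescaled error rate parameter.''
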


Any $D$-dimensional router may itself be comprised of qutrit routers under edge contractions in the corresponding graphs. This immediately follows from the fact that one can always construct a binary tree with $D - 1$ leaf nodes. We will only concern ourselves with choices of $D$ that are integer powers of 2, where the qutrit subtree that forms a $D$-dimensional qudit is a complete binary tree. What do the wait state $\ket{W_D}$ and the logical routing states $\ket{0'}, \dots, \ket{(D - 2)'}$ look like when mapped to qutrits? 
In the strictest sense, we may define $\ket{W_D} := \ket{W}^{\otimes D - 2}$, though any state in which the path from the input to the generalized router to any of its output modes is interrupted by at least one wait state is also sufficient. 

For routing states, all qutrits that do not need to be active to route an input state correctly are fixed to the wait state. In principle, these constraints can be relaxed, as well. All remaining computational states are designated as leakage to outside of the logical space. 

\begin{definition}
    Let $d := \log(D - 1)$. For a target router dimension $D \geq 3$, a coarse graining map $\mathfrak{g}_{d, u}: \mc{R} \rightarrow \mc{R}/\hspace{-4pt}\sim$ where $u \in [1, n \bmod d]$ maps a perfect binary tree of sufficiently large depth $n$, $G$ to a tree $G'$ under edge contractions where the first $u$ and last $n \bmod d - u$ levels are comprised of uncontracted degree-3 vertices. All other vertices are $D$-regular.
\end{definition}

Let $D^{\uparrow}$ denote the maximum dimension needed to grain any noise channel. Clearly, $\log D^{\uparrow}$ is at most the size of the support of any error, $L$, which occurs if a correlated error is a path in $G$ including at most one vertex per level. Every error is single-router local in at least one grained tree. For any error channel, the quotient graph with the lowest dimensional routers satisfying such a requirement is unique, and this is the one we will use to account for the effects of such a noise channel. To order $O(\varepsilon^2)$, we can commute the noise channels in \eqref{eq: Cluster expansion} and order them such that all channels that are matched to each $\mathfrak{g}_{d, u}(\mc{R})$ act in succession. 

\begin{equation}
    \overline{\mc{E}}_t = \prod_{d, u}\left[ \bigotimes_{r \in \mathfrak{g}_{d, u}(\mc{R})} \mc{E}_{d, u}^{r}\right]
\end{equation}

Each tree $\mathfrak{g}_{d, u}(\mc{R})$ itself satisfies a bound of the form \eqref{eq: Scaling under coherent noise}, with a rescaled error rate parameter $\varepsilon_{d}$. Under this construction, the qutrit routers at the peripheries of grained trees with router dimensions $D > 3$ are treated noiselessly. Noise acting on these qutrit routers that is single-router local will be accounted for in the trivially grained graph $\mathfrak{g}_{1, 1}(\mc{R}) = \mc{R}$. As an example, in Fig. \ref{fig: Graining} depict the two nontrivial grainings of an $n=2$ bucket-brigade tree up to locality $L = 2$ with a two-router correlated error. 

\begin{figure*}[hbtp]
    \centering{}\includegraphics[width=2.0\columnwidth]{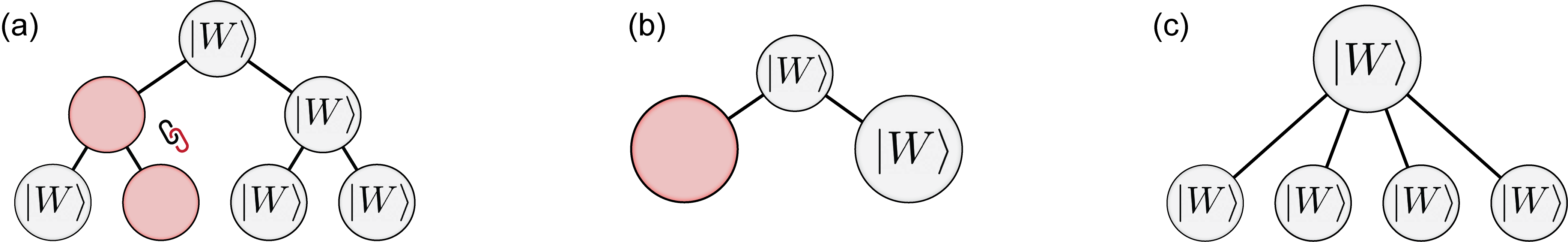}  
    \caption{The coarse grainings $\mathfrak{g}_{1, 1}(\mc{R})$ and $\mathfrak{g}_{2, \cdot}(\mc{R})$ on an $n=2$ tree. The two-router correlated error in the trivial coarse graining of (a) becomes local on a $D = 5$ router in (b), where edge contractions between the bottom two levels of the tree have been performed and root router of the tree is noiseless. Thus, this correlated error need not be considered in the graining of (c), where the top two levels of the tree are contracted, and the leaf layer is noiseless.}
    \label{fig: Graining}
\end{figure*}

We have categorized every error channel into a unique graph $(d, u)$. At any given $t$, an error analysis on every graph gives a lower bound on the fidelity $f_{d} = \min_u f_{d, u}$. The error effects from these can be taken together for a minimum overall fidelity. When all $\mc{E}_{t, S}$ can be expressed in Kraus representation with a Hermitian principal operator, we have seen fromp past sections that these infidelities can be treated additively.
\begin{align}
    1 - F & \leq \sum_{d = 1}^{L} \left[ (n+1) \bmod d \right] (1 - f_d) \\
    & \leq \sum_{d} A \varepsilon_d \tau n
\end{align}
because $u$ can assume $d$ values, and the factor of $n+1$ that arises from the ungrained tree having depth $n$ should be replaced in each grained tree $\mathfrak{g}_{d, u}(\mc{R})$ with $\lfloor \frac{n+1}{d} \rfloor \leq \frac{n+1}{d}$ levels of noisy routers. The error rate is now defined on the channel composed of all $\mc{E}_{t, S}$ that can fit into single-router errors in the tree.
\begin{equation}\label{eq:eps_d}
     \varepsilon_d \leq \max_{r \in \mathfrak{g}(\mc{R})} \hspace{-40pt} \sum_{\substack{\mc{E}_t \\ \text{supp}(\mc{E}_t) \subseteq \mathfrak{g}_{d, u}^{-1}(r) \\ 
    \text{supp}(\mc{E}_t) \not\subseteq \mathfrak{g}^{-1}_{d', u'}(r') \ \forall d' < d}} \hspace{-40pt} \varepsilon_t
\end{equation}
where $\mathfrak{g}^{-1}_{d, u}$ denotes the pre-image of $r$ in $\mc{R}$. 

In the general setting, the infidelity contribution from each graph is not simply (sub)additive, as we will discuss in the following section. In this case, one can write the composite error channel as a product of the channels that can be binned to each grained tree. Their individual infidelities can be used to bound the trace distance via the Fuch-van de Graaf inequalities \cite{Fuchs1998}, which is subadditive and obeys a triangle inequality, and can be converted back to lower bound the fidelity.

\subsection{Coherent Noise Contribution} \label{sec: coherent noise}

 Theorem \ref{thm: Hann et al., 2021: Main} is applicable to non-Pauli channels of interest, including non-unital noise such as relaxation. However, by virtue of that facts that $K_0$ is Hermitian and is close to $\mathds{1}$, it implies that $\mc{E}$ is presumed to be strongly stochastic noise. For coherent noise, we obtain a bound with larger polynomial degree in $n$.

Extending Theorem \ref{thm: Hann et al., 2021: Main} to include arbitrary channels $\mc{E}$ introduces the subtlety that noise with high \textit{unitarity}, i.e. that is dominated by interference effects \cite{Wallman2015}, can result in a fidelity that decays quadratically with iterated applications of the channel. This is true in the error estimations of generic quantum circuits \cite{Iverson2020}. However, the resilience of QRAM follows from a specialized analysis, so one might question whether this is still applicable. Indeed, we find that it is. For instance, an $n$-qubit GHZ address state,
\begin{equation}
    \ket{\psi_{\textit{in}}} = \frac{1}{\sqrt{2}} \left(\ket{0}^{\otimes n} + \ket{1}^{\otimes n} \right) \ket{+},
\end{equation}
unitary phase rotations of the form $e^{i \kappa Z}$, where $\kappa \ll 1$ on the entire tree accumulate coherently such that the fidelity decays at $O(\tau^2 n^2) = O(n^4)$. Similar coherent errors that affect the wait state can also lead to quadratically faster deviation from unity in the partial trace over $\mc{R}$ from unity.

\begin{theorem}
    The generalization of Theorem \ref{thm: Hann et al., 2021: Main} to arbitrary single-qubit noise channels, where $K_0$ need not be Hermitian, satisfies
    \begin{equation}
        1 - F \leq A \varepsilon (\tau + 1)^2 (n + 1)^2 \in O(n^4). \label{eq: Scaling under coherent noise}
    \end{equation}
\label{thm: Coherent noise}
\end{theorem}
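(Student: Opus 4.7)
The plan is to extend the argument of Theorem~\ref{thm: Hann et al., 2021: Main} by isolating the coherent component of the principal Kraus operator. For each noise channel I would take the error-rate-minimizing Kraus decomposition from Def.~\ref{def: Error rate} and split $K_0 = H + iA$ with $H,A$ Hermitian. The definition directly gives $\min_\psi \langle H \rangle_\psi^2 \geq 1 - \varepsilon$, so $\lVert H - \mathds{1} \rVert = O(\varepsilon)$. Complete positivity $\sum_\mu K_\mu^\dag K_\mu = \mathds{1}$ forces $H^2 + A^2 + i[H,A] \preceq \mathds{1}$, which combined with the bound on $H$ yields $\lVert A \rVert = O(\sqrt{\varepsilon})$. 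Hence $iA$ is a small unitary-like perturbation, while $H$ alone would permit the existing Hann \emph{et al.} machinery to deliver the $O(n^2)$ bound.

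Next, I would reuse the good/bad subspace decomposition from Sec.~\ref{sec: prelims}: for every configuration $\chi$, the subspace $V_\chi$ of addresses whose branches see only $K_0$ is identified by the same counting argument, and the wait state still suppresses propagation on $V_\chi^\perp$. The new ingredient is to track the action of $K_0$ on $V_\chi$ to second order in $A$. On such a branch, the total post-selected operation is a composition of $O(n\tau)$ factors $H_{r,t} + i A_{r,t}$. Since the contribution of $H_{r,t}$ alone is already accounted for by Theorem~\ref{thm: Hann et al., 2021: Main}, the leading novel contribution is $\mathds{1} + i S_\chi$ with $S_\chi = \sum_{r \in \mc{R}_i,\, t} A_{r,t}$, so $\lVert S_\chi \rVert = O(\sqrt{\varepsilon}\, n \tau)$ in the worst case of coherent addition. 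Crucially, as in the Hermitian proof, this operator acts identically on every error-free branch, so it does not decohere branch superpositions but globally rotates the coherent part of $\Tr_\mc{R} \sigma_{\text{out}}^\chi$.

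To close the argument I would expand the fidelity to second order. The Hermitian piece contributes the usual infidelity linear in $\varepsilon n \tau$, while the coherent piece enters as $|\langle \psi_{\text{out}} | (\mathds{1} + i S_\chi) | \psi_{\text{out}} \rangle|^2 \approx 1 - |\langle S_\chi \rangle_{\psi_{\text{out}}}|^2$, producing an infidelity deficit of order $\varepsilon (n\tau)^2$. After taking the classical expectation over $\chi$ and combining with the probability that a configuration lies in the good subspace (which is modified only at subleading order), the two contributions add to $O(\varepsilon (\tau+1)^2 (n+1)^2)$, matching the claimed bound. A final step converts the bound on the partial-trace expectation back to a lower bound on $F_q$ via the same identity-partition trick used in the proof of Theorem~\ref{thm: Three-level, mixed-unitary robustness}.

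The main obstacle I anticipate is making the second-order perturbation step rigorous. One has to control cross terms between the Hermitian and anti-Hermitian parts of different $K_0$ factors along a branch without inflating the constant, and one has to verify that the anti-Hermitian accumulations across distinct branches $i \neq j$ do not interfere more destructively than the naive counting suggests, which would spoil the quadratic-in-$n\tau$ scaling. Because the induced global rotation $S_\chi$ may itself depend nontrivially on $\chi$ through higher-order Kraus branching, care is needed when averaging; I expect this can be handled by first conditioning on $\chi$, bounding the conditional infidelity, and then extracting the $\varepsilon$ from $\lVert A \rVert^2$ by a convexity argument before summing over branches and time steps.
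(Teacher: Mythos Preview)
Your proposal is correct in spirit and lands on the same mechanism as the paper, but the execution differs in two ways worth noting. The paper uses the polar decomposition $K_0 = V_0 P_0$ rather than your Hermitian/anti-Hermitian split $K_0 = H + iA$; at leading order these are equivalent, with your $A$ playing the role of the generator of $V_0$. More importantly, the paper does not redo a perturbative expansion of the fidelity. It instead isolates the single inequality in the Hann \emph{et al.} argument that is sensitive to the phase of $K_0$---the backaction term $\bigl\langle \bigl(\prod_{t,r} K_{t,0}^r\bigr)^\dagger\bigl(\prod_{t,r} K_{t,0}^r\bigr)\bigr\rangle_\phi$---and shows that with a non-Hermitian $K_0$ its lower bound degrades from $(1-\varepsilon)^{(\tau+1)(n+1)}$ to $(1-\varepsilon)^{(\tau+1)^2(n+1)^2}$. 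Everything else in the original proof then carries through verbatim with the substitution $n+1 \to (n+1)^2$, $\tau+1 \to (\tau+1)^2$. This is more economical than your route and avoids the cross-term bookkeeping you flag as the main obstacle.

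One inaccuracy to fix: your claim that the accumulated coherent piece $S_\chi$ ``acts identically on every error-free branch'' is not correct. Different branches $\mc{R}_i$ contain different routers, so they accumulate different phases; this branch-dependent dephasing is exactly what the paper's GHZ example exploits to saturate the $O(n^4)$ scaling. Your final bound still survives because each branch individually sees $O(n\tau)$ factors of $A$, but the infidelity arises from \emph{relative} phases between branches (and between the branch and the residual tree state), not from a single global rotation $\langle S_\chi\rangle_{\psi_{\textit{out}}}$. Reframing your second-order step in those terms---or simply targeting the same inner-product inequality the paper does---resolves the issue you anticipated.
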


We remark on a critical component of the proof of Theorem \ref{thm: Hann et al., 2021: Main} in \cite{Hann2021} in relation to coherent noise. In the absence of coherent noise, one can write the action of some $K_0$ as a near-identity Kraus operator on a generic state, 
    \begin{equation}
        K_0 \ket{\psi} = a(\psi) \ket{\psi} + b(\psi) \ket{\psi^\perp}
    \end{equation}
    where $\ket{\psi}^{\perp}$ lies in the orthogonal subspace to $\ket{\psi}$ in $\mc{H}$. Since $K_0$ is an observable, all $a_{\psi} \in \mathbb{R}$. Trivially, 
    \begin{equation}
    \prod_{t = \tau}^{1} \prod_{r \in \mc{R}_i} a_{t, r}(\psi) \geq (1  -  \varepsilon)^{\frac{(\tau + 1) (n+1)}{2}}.
    \end{equation}
    This allows us to bound the subsequent term that becomes relevant when one performs the partial trace calculation.
    \begin{equation}
        \Braket{\left(\prod_{t, r} K_{t, 0}^{r}\right)^2}_{\phi} 
        \geq (1 - \varepsilon)^{(\tau + 1) (n+1)} \label{eq: PT Inner Product Term}
    \end{equation}
    for any $(n+1)$ qubit separable $\ket{\phi}$. Equation \eqref{eq: PT Inner Product Term} accounts for the infidelity introduced by a nontrivial backaction. Error-free branches (for a specific configuration of Kraus operators) are imperfect since $K_0$ introduces small errors on the injected address and bus states. Furthermore, if $i, j$ are queried, $K_0$ terms acting on $\mc{R}_j$ appear to the $i$ query as perturbing the final state of the tree. However, in the query to $j$, those routers' states are stored in the noiseless address and bus registers. This leads to residual address-tree entanglement, though \cite{Hann2021} showed this entanglement to be small.
 The above argument however only holds for incoherent noise. The presence of coherent noise can be distinguished for any $K_0$ in its polar decomposition, $K_0 = V_0 P_0$, where $V_0$ is unitary and $P_0 = (K_0^\dag K_0)^{\frac{1}{2}}$ is Hermitian positive definite for $\varepsilon < 1$. Hermitian $K_0$ are already in polar form: $V_0 = \mathds{1}, P_0 = K_0$, and therefore, the coherent contribution is trivial. For instance, in a Bernoulli channel, we have $P_0 \propto I$. When $V_0$ is generated with a small but finite rotation angle $\kappa$, $|\kappa| \ll 1$, then
\begin{equation}
    \min_{\psi} | \Re \braket{K_0^{\otimes n}} | \leq n^2 \varepsilon + O(\kappa^4),
\end{equation}
from which it immediately follows that
\begin{equation}
           \Braket{\left(\prod_{t, r} K_{t, 0}^{r} \right)^\dag (h.c.)}_{\phi}
        \geq (1 - \varepsilon)^{(\tau + 1)^2 (n + 1)^2}.
\end{equation}
The right-hand side takes on the form of \eqref{eq: PT Inner Product Term} only with the replacements $n+1 \rightarrow (n+1)^2$ and $\tau + 1 \rightarrow (\tau + 1)^2$. With this modification, the rest of the proof of Theorem~\ref{thm: Hann et al., 2021: Main} follows, leading to the modified bound of Equation~(\ref{eq: Scaling under coherent noise}).

The general technique to mitigate unitary noise, which in this case increases the infidelity scaling from quadratic in $n$ to quartic, is to perform twirling. But in the QRAM setting, caveats may emerge when applying twirling and require special handling, to which we will return in Section \ref{sec: Pauli twirling}. The system scale at which this coherent noise scaling emerges dominant over the incoherent contribution from $P_0 \preceq \mathds{1}$ depends in part in on the relative magnitudes of the $V_0$ and $P_0$ components. If the coherent contribution is small or unitary effects naturally cancel, approximately quadratic scaling may still be observed.

\section{Pauli Twirling} \label{sec: Pauli twirling}
\subsection{Randomized compiling}
A common error mitigation technique is to randomize coherent errors over many iterations of a circuit, known as randomized compiling. We preface each logical gate in the circuit, $V$, with a random twirling operator, $T$, sampled from a group of unitaries. Suppose that only the implementation of $V$ is noisy as $\tilde{\mc{V}}(\rho) = \mc{E}(V \rho V^\dag)$. One then applies the correction operator $M$ that nullifies the logical effects of the twirling operator. In expectation over the distribution of all twirling gate configurations, the channel $\mc{E}$, regardless of the structure of $V$, can be randomized into incoherent noise under a sufficient choice of twirling group.

\begin{definition}
    The $V$-twirl over a group of unitaries $\mathbf{T}$ is given by a map from any CPTP map $\mc{E}(V \rho V^\dag)$ to a set of unitarily equivalent channels 
    \begin{equation}
        \{M \mc{E} (V T \rho T^\dag V^\dag) M^\dag: T \in \mathbf{T}\}
    \end{equation}
    where $\mathbf{M} \ni M = V T^\dag V^\dag$ such that for uniformly sampled $T \in \mathbf{T}$, the expected channel
        \begin{align}
        \mathscr{T}_{\mathbf{T}}^V[\mc{E}](\rho) & = \E_{T \sim \mathbf{T}}
        \left[
        M\mc{E}(VT\rho T^\dag V^\dag)M^\dag
        \right] \\
        & = \E_{T \sim \mathbf{T}} \left[
        T^\dag \mc{E}(T V\rho V^\dag T^\dag)T
        \right]  \label{eq: Twirling just noise}
    \end{align}
    projects $\mc{E}$ onto the commutant of $\mathbf{T}$ \cite{Wallman2015, Wal2016, Cai_2019}.
    \label{def: Twirling}
\end{definition}

Pauli twirling on qubits results from choosing $\mathbf{T} = \mathbf{P}_n$, the $n$ qubit Pauli group, which forms the smallest unitary 1-design in any qubit dimension. In expectation, it tailors arbitrary noise into a stochastic Pauli channel. With randomized compiling, every error channel is individually tailored in this way, so that the infidelity scales linearly in the number of channels. If we instead twirl $n$ consecutive errors in blocks of $m$, the final infidelity scales linearly in the $n/m$ independently randomized blocks, but the error rate of each block still scales quadratically in $m$. In randomized compiling, every gate, and thus each channel, is twirled as an independent subunit, but in the discussion that follows, we will consider twirling composite blocks of channels, where the size of each block constrains its effectiveness. 

We have assumed that $U$ and $M$ are noiseless in Definition \ref{def: Twirling}. In practice, this can be significantly relaxed into the condition that these gates are ``easy'' to implement, a categorization that is experimentally motivated. The easy gates should be implementable with higher fidelity than their ``hard'' counterparts, and the errors that occur during easy gate rotations are approximately independent of which easy gate is implemented, though this can be relaxed \cite{Wallman2015}. Although we have the freedom to choose an easy twirling group outright, ensuring that $\mathbf{M}$ comprises only easy gates depends also on $V$. If implementing a Clifford circuit, then $\mathbf{M} = \mathbf{P}_n$ is trivial. However, for the query circuit, we will necessarily need to append a non-Pauli gate to the easy gateset. In the canonical Clifford+$T$ basis, this is the $\pi/4$ phase gate $S = \text{diag}(1, i)$ since $[Z, T] = 0$, and $TXT^\dag = TYT^\dag = S$ \cite{Wallman2015}.
The remaining Clifford+T generators, the Hadamard gate, $H$, the $T$ gate, and a two-qubit maximally entangling gate like CZ, are classified as hard.

Applying this typical twirling protocol to the query circuit accompanies a few caveats. Firstly, the model of synthesizing $U$ from Clifford+T native gates, which we refer to as a ``digital'' QRAM implementation, demands that the native gates of a QRAM support universal quantum computation. 
\begin{equation}
    \adjustbox{valign=m}{\includegraphics[width=0.4\textwidth]{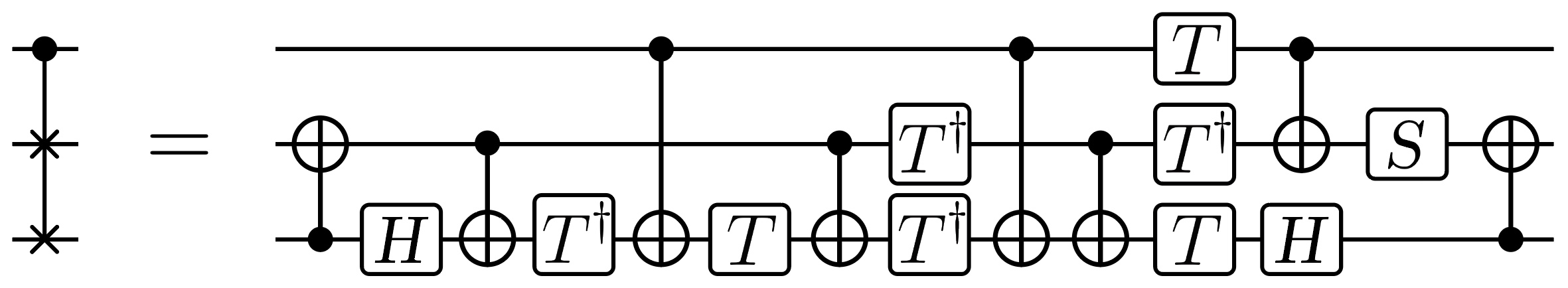}}
\end{equation}
This is true of any architecture that supports Pauli gates and synthesizes $U$ with a Hadamard-like rotation, making it equivalent to an exponentially large, universal quantum processor. In principle, it may be more natural to implement quantum routing in an ``analog'' fashion, for which the gate depth may be considerably shorter \cite{Kim2023_2, Evered2023}. In \cite{Weiss_PRXQ_2024_3D_QRAM}, two such analog implementations of $R$ are proposed on dual-rail superconducting cavities. In the first, a cavity-controlled SWAP is built directly used from beamspliters and $ZZ$ rotations. In the second, simultaneous coherent routing is realized with directional photo emission, where the photon path is entangled with the state of the resonator, such that $U$ itself is the gate primitive.  These yield higher-fidelty syntheses of $U$ than the Clifford+T basis, but these elementary gates are more challenging to twirl since $\mathbf{M}$ is not a subgroup of the easy gates. Pauli twirling a single CSWAP gate requires corrections with multi-qubit gates, which are hard operations in most architectures. Twirling $U$ as a single gate is even more complex.

In the general setting, this issue is tackled in \cite{Santos2024} with a pseudotwirling protocol that resembles dynamical decoupling \cite{Lidar2014}, but even there, systematic, e.g., miscalibration, errors persist \cite{Santos2024, Pandit2024}. To address these issues, we discuss a QRAM twirling protocol that uses only Pauli gates and SWAP to twirl a QRAM of any size, under the general gate primitive of the routing unitary. In two-level QRAM, this exactly twirls arbitrary noise to stochastic Pauli, and in the three-level protocol, this remains true at first order. Moreover, this protocol only involves two additional layers of gates relative to the two-level query circuit \eqref{eq: Two-level query circuit} for any size $n$.

\subsection{Two-level delayed twirling}

Our strategy to circumvent this issue will be to exploit the circuit's time reversal symmetry. Rather than twirl around an individual routing gate $U$ executed at time $t_d < \frac{\tau}{2}$ during the downstream routing process, we might imagine delaying the correction until we have completed its conjugate upstream routing unitary at $t_u = \tau - t_d$ supported on the same qubits. In this way, we perform a $V_{\tau - t} \dots V_t$-twirl. Imagine that we set $V_{\vec{x}} = \mathds{1}$. Then, $V_{\tau - t} \dots V_t = \mathds{1}$, and so the correction would be trivially easy since it is the same as the twirling gate. Reintroducing the data copying unitary, this twirling protocol may lead to logical failures. Namely, bit flip twirling operators on the address may route the bus qubit to the wrong mode of the memory. Only non-identity twirling operators with $T = Z$ do not interfere with the query in this way, since they commute with $Z$-basis controlled routing gates and will eventually be corrected. 

Suppose a twirling operator were $P^{r_c} = X^{r_c}$ were applied prior to some $U^{\vec{r}}$ during the downstream routing. Correcting all the twirling operators would easily entail hard gates, but as a bare minimum, we can SWAP $r_0$ and $r_1$, ensuring progress towards the correct bus mode. Note that this doesn't neutralize the twirling operators themselves; rather, it just corrects its immediate action. However, there are also residual effects on the other qubits in $\vec{r}$ that need to be tackled.

Consider a query with $\ket{\psi_{\textit{in}}} = \ket{i, +}$, and we are twirling some $U^{\vec{r}}$ where $r \in \mc{R}_i$. We might apply a twirling operator on $r_h$ before $U^{\vec{r}}$. However, we know that it will either be swapped with the $r_0$ mode if $\ket{r_c} = \ket{0}$, in which case the state stored in $r_1$ remains in place, or vice versa. In each case, note that the state stored wherever $\ket{r_h}$ does not end up is just ancillary. If $\ket{r_c} = \ket{0}$, the state in $r_0$ prior to $U_{\vec{r}}$ is swapped with $r_h$. Until at least time $t_u$, it will remain confined to the hold registers of quantum routers in $\mc{R}_i$ or the address/bus registers. The information stored there has no impact on the query. 
This observation is the consequence of the following fact: \textit{for any initialization $\ket{i, +}\ket{\mc{R}}$ and any $U^{\vec{r}}$ performed during the query circuit $Q$ such that $r \in \mc{R}_i$, the state in $r_c$ can be retraced to having been stored in an address or bus qubit register at the start of the query}. This guarantees that to ensure the bus qubit reaches its destination, we need only correct the effects of bit-flips on the states stored in the $r_c$ and $r_h$ prior to $U^{\vec{r}}$. 

In general, suppose that $r$ is in some level $\ell$ of the tree. We know that $r_h$ is being routed down the tree en route to being absorbed into a router at some $\ell' > \ell$, though we do not know at which of the $2^{\ell - \ell'}$ possible locations. Nonetheless, we can simply record if an $X$-type Pauli was applied on $r_h$ at level $\ell$, and later, when it occupies the control register somewhere on level $\ell'$ within the tree rooted by $r$, we will simply account for this twirling operator when we perform the SWAP correction. As discussed before, the correction operator that completes the twirl will be applied during the conjugate upstream routing process. At the moment, we are only describing logically repairing the twirled state.

There are several nuances relevant for twirling to be successful, which we outline. 
\begin{enumerate}
    \item \textit{Error Propagation}: One may be concerned that twirling mid-query around $V_{\tau - t} \cdots V_t$ where $t > 1$ may undermine the noise robustness of the arbitrarily-initialized two-level QRAM proven in Theorem \ref{thm: Hann et al., 2021: Main}. We expect this not to be the case. Although having applied a twirling operator may alter which routers are candidates to propagate errors, the twirling operator is a locally applied unitary that locally affects the states stored in $r_1$, whereas the propagation-induced incoherence arises from the fact that the state already being stored in $r_1$ is entangled with the branch being queried.
    \item \textit{No Extra Copying}: Returning to the second of our aforementioned criteria, since delayed twirling does not restore the tree to its initial state prior to $V_x$, we need to employ the query doubling protocol used for arbitrary initializations (Section \ref{sec: Arbitrary initializations}). However, it is also imperative that our twirling scheme does not violate the effectiveness of the compute-uncompute technique used to ensure no undesired copying of the memory data onto the tree occurs, which may lead to address-tree entanglement. If we twirl each query independently, special precautions must be taken to ensure that the non-bus modes that copied memory information all return to the correct memory cells in the second query.
    \item \textit{Twirling Errors}: In randomized compiling protocols to compile the twirling and correction gates into flanking layers of easy gates. This way, they introduce few, if any, additional errors. However, in the query circuit, these clock cycles of easy gates consist of only bit flips on the relevant control qubits. On the target modes of a controlled-SWAP, these easy gates are compiled into the identity, so implementing them physically in hardware may introduce more errors. At finite sizes with realistic, non-adversarial noise profiles, we suggest a similar protocol that is theoretically less powerful but entails only two layers of Paulis for a QRAM of any size. 
\end{enumerate}
In order to ensure no-extra copying, we must correlate our twirling operators between the first and second queries. Otherwise, we may not properly uncompute the data copying. Even if the tree is initialized in a state like two-level state like $\ket{0}^{\otimes 2N}$ that usually would not require querying the tree twice, the application of unresolved $X$-type twirling operators with delayed twirling can impart phases entangled with the address state. Correlating the twirling operators in this way contrasts with the randomized compiling approach, where the twirl for each gate is sampled IID uniformly over the group, and so the proven effectiveness of randomized compiling would be inapplicable in this setting. To avoid  introducing higher-order polynomials of the twirling operators, we note that $Q' = Q \text{CX}^{\mc{B}, \mc{B}'} Q$ exhibits a time reversal symmetry across $\text{CX}^{\mc{B}, \mc{B}'}$ just as $Q = V_d^\dag V_{\vec{x}} V_d$. 

\bigskip

\textbf{Algorithm 1:} \textit{In Situ} Delayed Twirling
\begin{enumerate}
\item We construct a randomly twirled $Q_{\text{tw}}$:
\begin{enumerate}
\item For each gate layer $V_t$ of $Q$, we sample a random Pauli $P$ on the support of $V_t$ to precede $V_t$.
\item We ``dress'' the $U$ gates from $V_t$ to $V_{\tau + 1 - t}$ by appending the necessary SWAPs as described in the main text.
\end{enumerate}
\item Obtain the twirled query unitary 
\begin{equation}
Q_{\text{tw}}' = Q_{\text{tw}}^\dag M \text{CX}^{\mc{B}, \mc{B}'} T Q_{\text{tw}},
\end{equation} where $T \in \mathbf{P}_2$ and $M = \text{CX} P^\dag \text{CX} \in \mathbf{P}_2$. 
\end{enumerate}

\bigskip

We describe the twirling protocol in Algorithm 1. Each twirling map extends for a prolonged duration, rather than just conjugating one gate as in randomized compiling. However, we can see that nested within each twirl are at most only two layers of gate noise that are not enclosed by a twirl nested further therein. Given $\mathscr{T}(\mc{E}_k \mathscr{T}(\mc{E}_{k-1} \cdots \mc{E}_2) \mc{E}_1)$, the channel $\mc{E}_k \cdots \mc{E}_2$ is randomized by the twirl that envelopes these channels, statistically independently of the outermost twirling map. By linearity, we can take the expectation over the $\mc{E}_{k-1} \cdots \mc{E}_2$ twirl, which yields a Pauli channel. The outermost twirl is only responsible for randomizing $\mc{E}_k$ and $\mc{E}_1$ on the periphery: a constant number of channels, and we know that even the outermost twirl alone yields a Pauli channel. Thus, even though twirls are prolonged to enclose a long duration of noise, recursively nested twirls randomize sub-pairings of channels at a constant frequency to disrupt coherent rotations. 

It is worth noting that the twirling operators that act on modes containing the bus during the routing process are unproblematic. In the query to a  branch $i$, the bus qubit will experience a specific sequence of Pauli twirling operators as it routed to $x_i$ that appears deterministic when observing only the action on the $\ket{i}$ address input. Since this sequence is also applied in reverse, we effectively conjugate the data copying operation $Z^{x_i}$ with a product of Pauli operators. If this Pauli product is $X$-type, for $x_i = 1$, the Pauli conjugation of $Z$ introduces a minus sign, but this sign is fixed in the second query.

As a consequence of this local twirling, coherent noise within a constant window is tailored to Pauli, which satisfies the definition of Bernoulli noise (Def. \ref{def: Bernoulli noise}). Therefore, with this twirling protocol, we may suppress the infidelity from scaling as $O(n^4)$ with coherent noise to $O(n^2)$, as given by Theorem \ref{thm: Twirling}

However, there are some practical benefits to collapsing the Pauli twirl in time, and allowing it only to randomize spatial noise correlations. In particular, at the expense of asymptotically worse fidelity scaling, we can suppress a factor of $n$ in the infidelity scaling without the need for SWAP gates. If we want to avoid performing SWAP gates, the very first time we twirl a routing gate on each $r$, we can fix a bit flip with classical corrections. Namely, we just permute the memory cells so that the following address and bus qubits approach the intended memory cell. This only suffices when no address or bus qubits have already been routed further down the branch, exploiting the same symmetry as in \ref{fig:Propagation dynamics}(b)-(c). Noting this, we can perform ``edge twirling'' \cite{Santos2024}, where Pauli operators are applied at the very beginning and the end of the circuit as a twirling technique. We depict this approach in Fig. \ref{fig:classical corrections twirl}.

\begin{theorem}\label{thm: Twirling}
For arbitrary single-router noise in line with the error model from Theorem \ref{thm: Coherent noise}, our delayed twirling approches suffice to tailor this noise to Pauli. With \textit{in-situ} corrections, we randomize coherent correlations both across different spatial error events and temporally, and thus,
\begin{equation}
    1 - F \leq 8 \varepsilon (\tau + 1) (n + 1),
\end{equation}
providing a quadratic suppression in the error scaling. In contrast, with classical corrections, we fail to randomize accumulations in time, and thus,
\begin{equation}
    1 - F \leq 8 \varepsilon (\tau + 1)^{2} (n+1).
\end{equation}
Notably, $\tau$ differs by a constant factor in each case, due to the need to physically perform SWAP gates in the former.
\end{theorem}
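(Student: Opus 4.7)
The plan is to reduce both bounds to applications of the previously established error scalings once we have shown that each variant of the delayed twirling protocol tailors the single-router error channels into Pauli (Bernoulli) form at the appropriate granularity. For the \textit{in situ} corrections scheme, the goal is to argue that, after taking the expectation over the twirling operators, the composite map $Q_{\text{tw}}'$ is unitarily equivalent to an ideal query circuit interleaved with single-router stochastic Pauli noise channels at every layer. Once this is established, Theorem~\ref{thm: Three-level, mixed-unitary robustness} yields $1 - F_q \leq 4\varepsilon(\tau+1)(n+1)$ for a single ideal query, and the factor of two arising from the query doubling in Equation~\eqref{eq: Two-level query circuit}, which is required to enforce no-extra-copying when the tree is not restored to $\ket{\mc{W}}^{\otimes 2N}$ before $V_{\vec{x}}$, produces the stated $8\varepsilon(\tau+1)(n+1)$ bound.

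To establish the Pauli tailoring for the \textit{in situ} protocol, I would proceed from the outermost twirl inward. The outermost twirl in $Q_{\text{tw}}' = Q_{\text{tw}}^{\dagger} M\, \text{CX}^{\mc{B}, \mc{B}'}\, T\, Q_{\text{tw}}$ is a $Q_{\text{tw}}$-twirl by $T \in \mathbf{P}_2$ with correction $M \in \mathbf{P}_2$, which by Definition~\ref{def: Twirling} projects any residual bus noise onto Pauli form. For each nested twirl with sampled Pauli $P^{\vec{r}}$ preceding $V_t$ and resolved by its dressed conjugate after $V_{\tau+1-t}$, the enclosed sub-channel is acted upon by a conjugation map that averages, over the uniform choice of $P^{\vec{r}} \in \mathbf{P}_{|\vec{r}|}$, to the projector onto the commutant of $\mathbf{P}_{|\vec{r}|}$. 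Taking expectations from the innermost twirl outward, each sub-channel collapses to a Pauli channel, and the remaining peripheral channels $\mc{E}_t, \mc{E}_{\tau+1-t}$ are in turn randomized by the enclosing twirl, as sketched in the discussion preceding the theorem. Consequently every $\mc{E}_t^r$ becomes Pauli in expectation, and the Bernoulli noise hypothesis of Theorem~\ref{thm: Three-level, mixed-unitary robustness} is met.

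For the classical corrections variant, the twirling Paulis are applied only at the boundary of the circuit (edge twirling) and bit flips in the bulk are compensated by classical permutations on the memory addresses rather than by mid-circuit SWAPs. The twirl therefore randomizes spatial correlations across different routers but not the coherent temporal accumulation within the noise channel of a single router. Thus each $\mc{E}_t^r$ becomes Pauli across routers but retains a non-Hermitian $K_0$ in time. Retracing the proof of Theorem~\ref{thm: Coherent noise}, the factor $(n+1)^2$ from coherent spatial addition across the $\sim n$ routers of a branch is suppressed to $(n+1)$ by the spatial Pauli tailoring, while the factor $(\tau+1)^2$ from coherent temporal accumulation remains. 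Combined with the query doubling factor of two and the constant prefactor, this gives $8\varepsilon(\tau+1)^2 (n+1)$; the fact that $\tau$ is smaller here than in the \textit{in situ} case follows trivially because no SWAP dressings are appended to the routing layers.

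The main obstacle, in my view, is verifying that the delayed-twirl map of Algorithm~1, in which the correction operator is applied far downstream of the twirling Pauli across a block of non-Clifford routing unitaries with operator-dependent SWAP dressings, genuinely realizes the Pauli twirl of Definition~\ref{def: Twirling} on the enclosed noise while preserving the logical action of $Q'$. One must confirm (i) that the dressed SWAP corrections preserve the no-extra-copying property across both halves of $Q'$, so that the uncompute step cleanly disentangles the tree from the address and bus, which requires correlating the Pauli samples between the two nested queries; (ii) that this correlation, which breaks the i.i.d.\ assumption of standard randomized compiling, still produces the required commutant projection, a fact that can be checked by exploiting the time-reversal symmetry of $Q'$ across $\text{CX}^{\mc{B}, \mc{B}'}$; and (iii) in the three-level setting, that Pauli operators restricted to the active subspace do not corrupt the wait-state structure used by the three-level noise resilience argument, so that the Bernoulli bound of Theorem~\ref{thm: Three-level, mixed-unitary robustness} remains applicable up to the stated constant prefactor.
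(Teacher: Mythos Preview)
Your proposal is essentially the same as the paper's argument: the paper does not give a formal proof environment for this theorem but instead relies on the surrounding discussion, which establishes (i) that the nested twirls of Algorithm~1 randomize the peripheral noise channels $\mc{E}_1, \mc{E}_k$ at each level of nesting so that, taking expectations from the innermost twirl outward, every layer becomes Pauli and hence Bernoulli, after which the mixed-unitary bound applies with an extra factor of two from query doubling; and (ii) that edge twirling with classical memory reshuffles randomizes only the spatial coherence across routers while leaving temporal accumulation intact, so one keeps $(\tau+1)^2$ but reduces $(n+1)^2$ to $(n+1)$ in the coherent-noise bound. Your identification of the three verification points (no-extra-copying under correlated samples, validity of the commutant projection despite non-i.i.d.\ sampling via the time-reversal symmetry of $Q'$, and wait-state preservation in the three-level case) mirrors exactly the enumerated caveats the paper raises before stating the theorem.
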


Even in this setting do we achieve twirled Pauli noise, as although the data copying operations we apply are adaptively dependent on the twirling operators, data copying consists of ``easy'' Pauli gates. The key difference is that here, we simultaneously twirl around coherent noise buildup across the entire algorithm. Thus, the accumulation of unitary rotations over $O(n)$ time is not suppressed, though the coherence across parallel errors in space is eliminated, leaving the infidelity to scale as $O(n^3)$.

\begin{figure}[hbtp]
    \centering
    \includegraphics[width=\columnwidth]{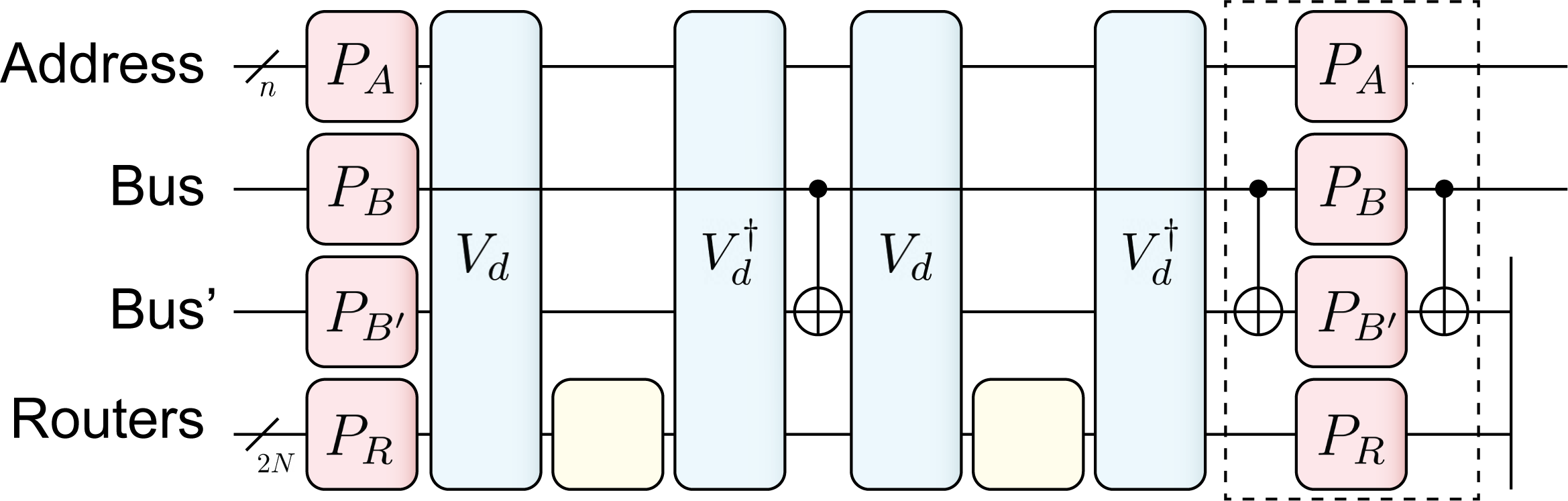}
    \caption{Delayed twirling with classical corrections. The yellow gates correspond to data copying operations, but with the memory data permuted according to Algorithm 2. The boxed product should be compiled into a single Pauli.}
    \label{fig:classical corrections twirl}
\end{figure}

\bigskip

\textbf{Algorithm 2:} Memory Reshuffle
\begin{enumerate}
\item Input: Twirling Pauli $P_m$ for the $m$th address qubit, $m = 1, \dots, n$.
\item Input: Classical memory vector, $\vec{x}$.
\item For $m$ from $1$ to $n$, if $|P_m| \in \{X, Y\}$,
\begin{enumerate}
    \item For $i \in \mathbb{F}_2^n$, exchange $x_i$ with $x_{\text{NOT}_m(i)}$.
\end{enumerate}
\item Output the resulting $\vec{x}$ as the adaptively permuted memory.
\end{enumerate}

\subsection{Challenges with three-level twirling}
The predominant issue with twirling three-level QRAM is that in addition to ensuring that the bus is routed to the correct memory cells, we must also ensure that the routers not along the address's path remains in the wait state. If we naively tried to use the twirling strategy from before, the single-qutrit twirling set would now be the qutrit Pauli group, which is $\mathbf{P}_{(3)} = \langle X_{(3)}, Z_{(3)}, e^{2 \pi i/3} \rangle$, where 
\begin{equation}
    X_{(3)} = \begin{pmatrix} 0 & 1 & 0 \\ 0 & 0 & 1 \\ 1 & 0 & 0 \end{pmatrix}, \hspace{15pt} Z_{(3)} = \begin{pmatrix}
        1 & 0 & 0 \\ 0 & e^{2 \pi i/3} & 0 \\ 0 & 0 & e^{4 \pi i / 3}
    \end{pmatrix}.
\end{equation}
Single-qutrit twirling operators $T \in \mathbf{P}_{(3)} \setminus \langle Z, e^{2 \pi i/3} \rangle$ transition routers out of the wait states. Specifically, $\langle X_{(3)} \rangle$ is isomorphic to the group of cyclic permutations on the three computational basis states. Thus, trying to edge twirl a query will take a significant fraction of the routers out of the wait state, leading to error propagations. 

Invariably, the wait state, or more generally the wait subspace, needs to be isolated from the active routing subspace. The qutrit Paulis fail as a twirling group since $\langle X_{(3)} \rangle \cong A_3$, the group of cyclic permutations on the three computational basis states, meaning no twirling operator acts as a bit flip in the active subspace and leaves $\ket{W}$ invariant. If we instead twirl over $\mathbf{P}$, the active subspace qubit Pauli group with trivial action on $\ket{W}$, coherent errors that dephase the wait state persist. Nonetheless, twirling could help suppress coherent effects. For instance, a natural embedding of logical qutrits in the routers is in the two-qubit Hilbert space, where the active subspace corresponds to the single-photon manifold. All states in $\text{span}\{\ket{00}, \ket{11}\}$ act as the wait space. A choice of two-qubit twirling group that would prevent mixing the passive and active subspaces is $\mathbf{T} = \langle \mathds{1} \otimes Z, Z \otimes \mathds{1}, X \otimes X, i \rangle$. The form of the twirled channel can be calculated from the methods of \cite{Cai_2019, Mitsuhashi2024}, where we can deduce that the resulting two-qubit Kraus operators lie in
\begin{equation}
    K \in \text{span}\{\mathbf{P}_2, e^{i \kappa (Z \otimes Z)} : \kappa \in \mathbb{R}\}
\end{equation}
because the $X \otimes \mathds{1}$-type twirling operators that randomize the two-qubit $ZZ$ parity have been excluded, leaving $Z \otimes Z$ in the center of this smaller group. Under approximately single-qubit errors, $|\kappa| \ll \varepsilon$, and although this may result in a $K_0$ with a coherent error contribution, this effect is a second-order correction, and thus, $1 - F \in O(\varepsilon n^2 + \varepsilon^2 n^4)$.

\section{Conclusion} \label{sec: Discussion}

In this work, we have shown that the noise resilience result of the QRAM extends to many experimentally relevant sources of errors, including initialization errors as well as spatially correlated and coherent noise. In each case, we have provided the modified bounds on the performance of the QRAM. In the case of coherent noise, the bound on infidelity increases quadratically (although the bound may still considered to be favourable, remaining polylogarithmic in the system size). By virtue of demonstrating the robustness of the QRAM when arbitrarily initialized, we have simplified the hardware constraints of an experimental QRAM construction, since one no longer needs to engineer reset mechanisms for an exponential fraction of the involved qudits.
\begin{table}[hbtp]
  \centering
\begin{tabular}{|c|c|c|c|c|}
    \hline
    QRAM & Initialization &  Twirled & \shortstack{Doubling} & \shortstack{$1-F$ Scaling}  \\ \hline \hline
    Two-level & Yes & No & No & $O(n^6)$ \\
    Either & No & No & Yes & $O(n^6)$ \\
    Either & Either & Yes & Yes & $O(n^3)$ \\
    Three-level & Yes & No & No & $O(n^4)$ \\
    Three-level & Yes & Yes & No & $O(n^2)$ \\
    \hline
  \end{tabular}
  \caption{Summary of error scaling results. We distinguish between cases of the three-level and two-level settings, fixed initializations, and twirling. Although not explicitly stated, a three-level QRAM without initialization is subject to the same bounds as a two-level QRAM, as discussed in Section \ref{sec: Arbitrary initializations}.}
  \label{tab:Summary}
\end{table}

Traditionally, the issues caused by coherent errors may be mitigated by randomized compiling. However, out-of-the-box approaches do not work well for QRAM. In the second part of our work, we tackle issues of this nature by exploring QRAM specific error suppression techniques. For coherent errors, we show that the original error scaling may be  restored by proposing a delayed twirling scheme that directly exploits the symmetries of the QRAM to turn coherent errors into stochastic errors. Beyond verifying that the QRAM remains robust in a wider class of situations, our proposals for reset-free QRAM and twirling mechanisms are amicable to near-term hardware limitations, where we imagine they may prove useful in facilitating a nontrivial experimental QRAM demonstration.

For future work, other than considering further methods to optimize and suppress errors in QRAM in the near-term, one should also consider how these methods can integrate with error correction. For instance, nonstabilizer codes in which CSWAPs are easily implementable may reconcile well with generality of our twirling protocol. It may also prove interesting to explore ways in which our approach to twirling could extend to other non-Clifford operations in algorithms with underlying symmetries.

\section{Acknowledgments}
We thank Connor Hann for useful discussions about uniting previous QRAM noise analyses with other types of noise, as well as with Shifan Xu for discussions about fat-tree QRAM architectures. We acknowledge support from the ARO(W911NF-23-1-0077), ARO MURI (W911NF-21-1-0325), AFOSR MURI (FA9550-19-1-0399, FA9550-21-1-0209, FA9550-23-1-0338), DARPA (HR0011-24-9-0359, HR0011-24-9-0361), NSF (OMA-1936118, ERC-1941583, OMA-2137642, OSI-2326767, CCF-2312755), NTT Research, Packard Foundation (2020-71479), and the Marshall and Arlene Bennett Family Research Program. This material is based upon work supported by the U.S. Department of Energy, Office of Science, National Quantum Information Science Research Centers and Advanced Scientific Computing Research (ASCR) program under contract number DE-AC02-06CH11357 as part of the InterQnet quantum networking project. This work was completed with resources provided by the University of Chicago’s Research Computing Center.

\bibliography{main}

\end{document}